\title[KP solitons under the Gel'fand-Dickey reductions]
{Non-crossing permutations for the KP solitons under the Gel'fand-Dickey reductions and the vertex operators}
\author{Shilong Huang$^a$, Yuji Kodama$^{a,b}$, and Chuanzhong Li$^a$}
\date{\today}
\address{$^A$ College of Mathematics and Systems Science, Shandong University of Science and Technology, Qingdao, 266590, China}
\email{shilonghhqu@163.com, lichuanzhong@sdust.edu.cn}
\address{$^B$ Department of Mathematics, The Ohio State University,
Columbus, OH 43210}
\email{kodama.1@osu.edu}
\subjclass[2000]{}
\def\tbox(#1,#2)#3{
\x=#1 \y=#2
\multiply\x by 12
\multiply\y by 12
\z=\x \t=\y
\advance\z by 12
\advance\t by 12
\psline(\x,\y)(\x,\t)(\z,\t)(\z,\y)(\x,\y)
\advance\x by 6
\advance\y by 6
\rput(\x,\y){{\bf #3}}}
\def\proof{\par{\it Proof}. \ignorespaces}
\def\endproof{{\ \vbox{\hrule\hbox{%
     \vrule height1.3ex\hskip0.8ex\vrule}\hrule }}\par}
\theoremstyle{definition}
\theoremstyle{remark}
\numberwithin{equation}{section}
\let\trueint=\int
\let\truesum=\sum
\def\int{\mathop{\textstyle\trueint}\limits}
\def\sum{\mathop{\textstyle\truesum}\limits}
\def\sech{\mathop{\rm sech}\nolimits}
\def\Gr{\text{Gr}}
\def\t{\mathbf{t}}
\def\0{\mathbf{0}}
\def\edge{\ar@{-}}
\def\dedge{\ar@{.}}
\newtheorem{theorem}{Theorem}[section]
\newtheorem{definition}[theorem]{Definition}
\newtheorem{proposition}[theorem]{Proposition}
\newtheorem{lemma}[theorem]{Lemma}
\newtheorem{example}[theorem]{Example}
\newtheorem{corollary}[theorem]{Corollary}
\newtheorem{remark}[theorem]{Remark}
\newcommand{\R}{\mathbb R}
\newcommand{\thmrefer}[1]{\renewcommand\thetheorem
  {\protect\ref{#1}}\addtocounter{theorem}{-1}}
\begin{document}

\begin{abstract}

We give a classification of the \emph{regular} soliton solutions of the KP hierarchy, referred to as the \emph{KP solitons}, under the Gel'fand-Dickey $\ell$-reductions in terms of the permutation of the symmetric group. As an example, we show that the regular soliton solutions of the (good) Boussinesq equation as the 3-reduction can have \emph{at most} one resonant soliton in addition to two sets of solitons propagating in opposite directions. We also give a systematic construction of these soliton solutions for the $\ell$-reductions using the vertex operators. In particular, we show that the \emph{non-crossing} permutation gives the regularity condition for the soliton solutions.
\end{abstract}

\maketitle

\tableofcontents

\section{Introduction}

The Kadomtsev-Petviashvili (KP) equation is a two-dimensional nonlinear dispersive wave equation given in the form,
\begin{equation}\label{eq:KP}
 \frac{\partial}{\partial x}\left(-4\frac{\partial u}{\partial t}+6u\frac{\partial u}{\partial x}+\frac{\partial^3 u}{\partial x^3}\right)+3\frac{\partial^2u}{\partial y^2}=0.
\end{equation}
The KP equation is integrable, and has infinitely many symmetries represented by the commuting flows. The parameters of the flows are expressed by $\{t_j:j=1,2,\ldots\}$, in particular, we denote $t_1=x, t_2=y, t_3=t$ for the KP equation. The set of all the flows defines the KP hierarchy, whose first member is the KP equation, and the variable $u$ is now considered as
a function of $\t=(t_1,t_2,\ldots)$. The real regular soliton solutions of the KP equation, referred to as the {\it KP solitons}, has been recently classified in \cite{CK:08, KW:14} (see the book \cite{K:17} for the review of these results).
Each KP soliton can
be determined by a pair of two real data $(\kappa, A)$, referred to as {\it soliton parameters}, where $\kappa=(\kappa_1,\ldots,\kappa_M)\in\mathbb{R}^M$ and an $N\times M$ matrix $A\in\text{Gr}(N,M)_{\ge 0}$, the totally nonnegative (tnn) {Grassmannian} defined by the set of $N$-dimensional subspaces in $\mathbb{R}^M$ with all nonnegative maximal minors of $A$. We also recall that the element $A\in\text{Gr}(N,M)_{\ge0}$ can be uniquely parameterized by the
derangement (permutation without fixed point, i.e. no 1-cycle) of symmetric group $S_M$, denoted by $\pi(A)\in S_M$.
Then the soliton solution $u(x,y,t)$ is given in terms of the $\tau$-function (see e.g. \cite{K:17}),
\begin{equation}\label{eq:u-tau}
u(x,y,t)=2\frac{\partial^2}{\partial x^2}\ln \tau(x,y,t),
\end{equation}
where the $\tau$-function is given by
\begin{equation}\label{eq:tau}
\tau(x,y,t)=\text{det}\left(AE^T(x,y,t)\right).
\end{equation}
The $N\times M$ matrix function $E(x,y,t)$ is defined by
\[
E(x,y,t)=\begin{pmatrix}
E_1 & E_2 &\cdots & E_M\\
\kappa_1 E_1& \kappa_2 E_2 &\cdots & \kappa_M E_M\\
\vdots &\vdots & \ddots & \vdots\\
\kappa_1^{N-1}E_1& \kappa_2^{N-1}E_2&\cdots &\kappa_M^{N-1}E_M
\end{pmatrix}\qquad\text{with}\quad E_j=e^{\kappa_jx+\kappa_j^2y+\kappa_j^3t}.
\]

In this paper, we classify the KP soliton solutions under the Gel'fand-Dickey $\ell$-reductions (hereafter simply $\ell$-reductions) of the KP hierarchy.
For example, the $2$-reduction implies $\partial u/\partial t_2=0~(t_2=y)$, and the KP equation with the boundary condition $u\to 0$ as $x\to \infty$ gives
\[
-4u_t+6uu_x+u_{xxx}=0,
\]
where $u_x=\partial u/\partial x$ etc. The KdV equation admits $N$-soliton solution, whose soliton parameters $(\kappa, A)$ are given by
\begin{align*}
\kappa&=(\kappa_1,\ldots,\kappa_N,-\kappa_N,\ldots,-\kappa_1)\in\mathbb{R}^{2N},\\[1.0ex]
A&=\begin{pmatrix}
1&0&\cdots &\cdots &\cdots&\cdots&0&a_N\\
0&1&\cdots&\cdots&\cdots&\cdots& a_{N-1}&0\\
\vdots &\ddots&\ddots& & &\iddots &\iddots &\vdots\\
0&\cdots&0&1&a_1&0&\cdots&0
\end{pmatrix}\in \text{Gr}(N,2N)_{\ge0},
\end{align*}
where $\kappa_1<\cdots<\kappa_N<0~(\kappa_{j}=-\kappa_{2N-j+1})$ and $a_k=(-1)^{k-1}|a_k|$.
For example, one $(N=1)$ soliton is determined by $\kappa=(\kappa_1,-\kappa_1)$ and $A=(1,a)\in\Gr(1,2)$, and it has the form,
\begin{equation*}\label{eq:KdVsol}
u(x,t)=2\kappa_1^2\sech^2\kappa_1(x+\kappa_1^2t+x_0),
\end{equation*}
where $x_0=\frac{1}{2\kappa_1}\ln a$. The derangement corresponding to the matrix $A$ of the $N$-soliton solution is given by the product of 2-cycles,
\begin{equation}\label{eq:KdVder}
\pi(A)=\prod_{k=1}^N(k,2N-k+1),
\end{equation}
where 2-cycle $(i,j)$ implies the transposition $i\leftrightarrow j$. Note here that all the 2-cycles in \eqref{eq:KdVder} form a {\it nesting} (see below, also \cite{CK:08a} for the definition). Each soliton  with the parameter $\kappa=(\kappa_k,-\kappa_k)$ in the $N$-soliton solution is represented by the transposition $(k,2N-k+1)$, which is considered as the derangement associated to $A_k\in\Gr(1,2)_{\ge0}$, i.e. $\pi(A_k)=(k,2N-k+1)$ and $\pi(A)=\prod_k\pi(A_k)$.

One should also note that the $\kappa$-parameters are the roots of the second order polynomials of $\kappa$,
\begin{equation}\label{eq:KdVpoly}
\Phi_2(\kappa,\alpha):=\kappa^2-\alpha=0, \quad \text{with}\quad \alpha=\kappa_k^2,~\,\,k=1,\ldots,N,
\end{equation}
which is a direct consequence of the 2-reduction of the KP hierarchy. We refer to the polynomial $\Phi_2(\kappa,\alpha)=0$ as the \emph{spectral curve} of the soliton solution for the 2-reduction, and will extend \eqref{eq:KdVpoly} to the general case of $\ell$-reductions.

The (good) Boussinesq equation is given by the 3-reduction
$\partial u/\partial t_3=0$, i.e.
\begin{equation}\label{Boussinesq}
(6uu_x+u_{xxx})_x+3u_{yy}=0.
\end{equation}
In the physical coordinates, the variable $y$ should be considered as the time variable.
One should note that under the boundary condition $u\to 0$ as $x\to\infty$, this equation does not have
a regular soliton solution as the steady propagating wave $u(x,y)=\phi(x+ay)$ with any constant $a$. In order to obtain a regular soliton solution,
one needs to have a nonzero boundary condition $u\to u_0\ne0$ as $x\to\infty$. With the change $u\to u+u_0$, Eq. \eqref{Boussinesq} becomes
\begin{equation}\label{B-shift}
6u_0u_{xx}+(6uu_x+u_{xxx})_x+3u_{yy}=0.
\end{equation}
One can easily check that this equation admits a regular soliton solution when $u_0<0$. Note that \eqref{B-shift} can be also obtained by change of coordinates in the KP equation with $x\to x+ct$, i.e.
\[
(-4u_t-4cu_x+6uu_x+u_{xxx})_x+3u_{yy}=0.
\]
Considering the 3-reduction for new coordinate, i.e. $u_t=0$, and identifying $c=-\frac{3}{2}u_0$, this equation gives \eqref{B-shift}. As we will show, this change of coordinates is a crucial step to classify the regular soliton solutions under the $\ell$-reductions for $\ell\ge 3$. We then choose the $\kappa$-parameters as the roots of the cubic polynomial of $\kappa$ for given constant $\alpha$,
\[
\Phi_3(\kappa,\alpha)=\kappa^3-c\kappa-\alpha=0,
\]
which is the spectral curve of solitons for the 3-reduction.
As one of the main results (Theorem \ref{thm:B}),
we give the regular soliton solutions of the Boussinesq equation in \eqref{B-shift}, {whose soliton matrix $A$} is parametrized by the derangement in the cycle notation either
\begin{align*}
\pi(A)&=(a_1,a_2)\prod_{k=1}^{n_1}(b_k,b_{2n_1-k+1})\prod_{l=1}^{n_2}(c_l,c_{2n_2-l+1})
\qquad\text{or}\\
\pi(A)&=(a_1,a_2,a_3)\prod_{k=1}^{n_1}(b_k,b_{2n_1-k+1})\prod_{l=1}^{n_2}(c_l,c_{2n_2-l+1}),
\end{align*}
where the cycles are all {\it non-crossing}, e.g. there is no case like $a_1<b_i<a_2<b_j$ or $b_i<c_j<b_k<c_l$ etc (see Definition \ref{def:NC} below for the details).
One should note that the regular soliton solution can include \emph{at most} one 3-cycle, which represents a resonant soliton solution with Y-shape, sometimes called Y-soliton
(see e.g. \cite{K:17}). This result implies that there are two sets of line solitons showing in 2-cycles propagating opposite directions, and each soliton gets positive (repulsive) phase shift by interacting with solitons from the same set, and negative (attractive) phase shift from the other set. Thus, the set of those solitons may provide a bidirectional model of
soliton gas as recently discussed in \cite{BD:24}.

\begin{remark}
As far as we know, there was no classification of ``regular'' soliton solutions of the (good) Boussinesq equation (see e.g. \cite{RS:17}). Our result is then gives
the first classification of regular soliton solutions of the Boussinesq equation, which states that there is no regular soliton solutions including more than one resonant solution.
Also note that singular solutions are easily obtained.
\end{remark}

We extend these results to the general $\ell$-reduction as follows. First define a monic polynomial of degree $\ell$
of $\kappa$, referred to as a \emph{spectral curve of $\ell$-reduction}, given by
\begin{equation}\label{eq:SC}
\Phi_\ell(\kappa,\alpha):=\varphi_\ell(\kappa)-\alpha=\prod_{j=1}^\ell(\kappa-\kappa_j[\alpha])=0.
\end{equation}
Here the polynomial $\varphi_\ell(\kappa)$ can be also obtained by the change of coordinates in the KP hierarchy
as in the case of 3-reduction, $\varphi_3(\kappa)=\kappa^3-c\kappa$. The main purpose of the spectral curve in the form \eqref{eq:SC} is to find the real roots for the soliton parameter $\kappa\in\R^M$
i.e. $\kappa_j[\alpha]\in\R$ and $\kappa_j[\alpha]\ne\kappa_l[\alpha]$ if $j\ne l$.

Then our main theorem (Theorem \ref{thm:main}) can be stated as follows.

\bigskip
\noindent
\emph{Theorem} \ref{thm:main}
Define the soliton parameter $\kappa\in\R^M$ as the ordered set of
\[
\mathcal{K}_\ell:=\bigcup_{i=1}^K\{\kappa_{j}[\alpha_i]:j\in\mathcal{I}[\alpha_i]\},
\]
where each $\kappa_{j}[\alpha_i]$ is a root of $\Phi_\ell(\kappa,\alpha_i)=0$, and $\mathcal{I}[\alpha_i]$ is a subset of $[\ell]:=\{1,\ldots,\ell\}$ with $|\mathcal{I}[\alpha_i]|=m_i$. Here $M=m_1+\cdots+m_K$. For each $m_i$, take $1\le n_i\le m_i-1$, and set $N=n_1+\cdots+n_K$.
Then for each pair $(n_i,m_i)$, construct the matrix element $A[\alpha_i]\in\text{Gr}(n_i,m_i)_{\ge0}$, so that the corresponding derangements $\pi(A[\alpha_i])$ are mutually \emph{non-crossing} for $i=1,\ldots,K$. Then the regular soliton solution of the $\ell$-reduction is given by the soliton parameters $\kappa\in\R^M$ and $A[\alpha_1,\ldots,\alpha_K]\in\Gr(N,M)_{\ge0}$ associated with a ($\kappa$-)direct sum of the matrices $A[\alpha_i]$,
\[
{\widehat\bigoplus}_{i=1}^K  A[\alpha_i],
\]
where $\widehat\oplus$ implies that the columns in the matrices are ordered according to the ordered set $\mathcal{K}_\ell$ of the $\kappa$-parameters.
Also the derangement $\pi(A[\alpha_1,\ldots,\alpha_K])$ is given by
\[
\pi(A[\alpha_1,\ldots,\alpha_K])=\prod_{i=1}^K\pi(A[\alpha_i]),
\]
that is, the soliton solution generated by the $\tau$-function with $(\kappa,A[\alpha_1,\ldots,\alpha_K])$ consists of these soliton solutions associated with $(\kappa[\alpha_i],A[\alpha_i])$, where $\kappa[\alpha_i]=\{\kappa_j[\alpha_i]:j\in\mathcal{I}[\alpha_i]\}$.

\medskip
\noindent
Here what we mean by the non-crossing in the derangements $\pi(A[\alpha_i])$ is that any cycle in $\pi(A[\alpha_i])$ has no crossing with all the cycles in other derangemants $\pi(A[\alpha_j])$ $(i\ne j)$ (Definition \ref{def:NC}).

\bigskip

 The paper is organized as follow.
 In Section \ref{sec:background}, we provide the background information on the KP hierarchy and the Gel'fand-Dickey $\ell$-reductions. We also give the determinant formula of the KP solitons in terms of the $\tau$-functions (see also e.g. \cite{K:17}). In Section \ref{sec:SC}, we define a modified form of the $\ell$-reduction for real regular solitons of the $\ell$-reduction. This is the main section of the paper.
We introduce an $\ell$-th polynomial $\Phi_\ell(\kappa,\alpha)=\varphi_\ell(\kappa)-\alpha=0$, referred to as the spectral curve of the $\ell$-reduction, whose roots give real soliton $\kappa$-parameters. We show that the polynomial $\varphi_\ell(\kappa)$ is a versal deformation of the degenerate polynomial $\kappa^\ell$, and the deformation parameters are obtained by the change of coordinates. We then consider a matrix $A[\alpha]\in \Gr(n,m)_{\ge 0}$ together with the roots of the spectral curve for a soliton solution.
A general solution is then constructed by several parameters $\alpha_j$, which give the roots $\kappa[\alpha_j]\in\R^{m_j}$ and the matrix $A[\alpha_j]\in\Gr(n_j,m_j)_{\ge 0}$.
We then define a $\kappa$-direct sum of the matrices (Definition \ref{def:SumA}), and introduce the notion of \emph{non-crossing} matrices (Definition \ref{def:NC}).
Then we show that the matrices $A[\alpha_j]\in\Gr(n_j,m_j)_{\ge0}$ are mutually non-crossing, then the the direct sum of these matrices can be totally nonnegative, that is,
the soliton generated by the $\kappa$-direct sum is regular (Theorem \ref{thm:main}).
In Section \ref{sec:Boussinesq}, we give the detailed study of the Boussinesq case, the 3-reduction.  The main result is explained above (Theorem \ref{thm:B}).
We also discuss a possible application of the result to a bidirectional model of soliton gas (see e.g. \cite{BD:24}).
In the final section, Section \ref{sec:VO}, we give a systematic method to construct the KP solitons under the $\ell$-reductions using the vertex operators.
We show that the non-crossing property of the matrices associated with vertex operator gives the regularity of the solitons generated by applying several vertex operators
(Proposition \ref{prop:regularity}).
We also remark that the regularity of the soliton solutions generated by the vertex operators has not been discussed before.

\medskip

\section{Background}\label{sec:background}
Here we give a brief review of the KP hierarchy and
the Gel'fand-Dickey $\ell$-reductions of the KP hierarchy. We also give the Wronskian
formula of the $\tau$-function for the KP solitons, in which we provide the notions of the \emph{totally nonnegative} Grassmannians and the permutations to label their elements.
Most of the materials in this section can be also found in \cite{K:17}.

\subsection{The KP hierarchy}
The KP hierarchy is formulated on the basis with a pseudo-differential operator,
\begin{equation}\label{eq:LaxOP}
L=\partial + u_2\partial^{-1}+u_3\partial^{-2}+\cdots,
\end{equation}
where $\partial$ is a derivative satisfying $\partial\partial^{-1}=\partial^{-1}\partial=1$ and the generalized Leibniz rule for a smooth function $f$,
\[
\partial^{\nu}f=\sum_{k=0}^\infty\binom{\nu}{k}(\partial_1^kf)\partial^{\nu-k},\qquad \nu\in\mathbb{Z}.
\]
Note that the series terminates if and only if $\nu$ is a nonnegative integer.
The functions $u_k$'s in $L$ depend on an infinite number of variables $\t=(t_1,t_2,\ldots)$. Each variable $t_n$ in $L$ gives a parameter of the flow in the hierarchy, which is defined in the Lax form,
\begin{equation}\label{eq:Lax}
\frac{\partial L}{\partial t_n}=[B_n,\,L]\qquad\text{with}\qquad B_n=(L^n)_{\ge 0}\quad (n=1,2,\ldots),
\end{equation}
where $(L^n)_{\ge 0}$ represents the polynomial (differential) part of $L^n$ in $\partial$.
The solution of the KP equation \eqref{eq:KP} is given by $u=2u_2$. The compatibility among the equations in \eqref{eq:Lax}, gives
\begin{equation*}\label{eq:ZS}
\frac{\partial B_n}{\partial t_m}-\frac{\partial B_m}{\partial t_n}+[B_n,\,B_m]=0,
\end{equation*}
which is called the Zakharov-Shabat equations.

The Lax operator \eqref{eq:LaxOP} can be expressed in the form,
\[
L=W\partial W^{-1},
\]
where $W$ is called the dressing operator given in the form,
\[
W=1-w_1\partial^{-1}-w_2\partial^{-2}-\cdots.
\]
Then the functions $u_i$'s in $L$ can be expressed by $w_j$'s in $W$.
For example, we have
\[
u_2=\frac{\partial w_1}{\partial x},\qquad u_3=\frac{\partial w_2}{\partial x}+w_1\frac{\partial w_1}{\partial x},\quad\ldots.
\]
Then, from the Lax equation, the dressing operator $W$ satisfies
\begin{equation}\label{eq:Sato}
\frac{\partial W}{\partial t_n}=B_nW-W\partial^n\qquad \text{for}\quad n=1,2,\cdots,
\end{equation}
which is sometimes called the Sato equation. The KP soliton can be obtained as a special solution of the Sato equation as explained below.

\subsection{$N$ truncation and the $\tau$-function}
Here we explain the Wronskian formula of the $\tau$-function, which is obtained by truncating $W$.
A finite truncation of $W$ with some positive integer $N$ is given by
\[
W=1-w_1\partial^{-1}-w_2\partial^{-2}-\cdots-w_N\partial^{-N}.
\]
The invariance of the truncation under \eqref{eq:Sato} leads to the $N$-th order differential equation for some function $f=f(\t)$,
\[
W\partial^N f=f^{(N)}-w_1f^{(N-1)}-w_2f^{(N-2)}-\cdots-w_Nf=0,
\]
where $f^{(n)}=\frac{\partial^n}{\partial x^n} f$. Let $\{f_i:i=1,\ldots,N\}$ be a fundamental set of solutions of
the equation $W\partial^Nf=0$. Then the coefficients $w_i(\t)$'s are given by
\begin{equation}\label{eq:wi}
w_i(\t)=-\frac{1}{\tau(\t)}p_i(-\tilde\partial)\tau(\t)\qquad\text{for}\quad i=1,\ldots,N,
\end{equation}
where $p_i({\bf x})$ is the elementary Schur polynomial of degree $i$ and $\tilde\partial=(\partial_1,\frac{1}{2}\partial_2,\frac{1}{3}\partial_3,\ldots)$, which is defined by
\begin{equation}\label{eq:xi}
e^{\xi(\t,k)}=\sum_{n=0}^\infty p_n(\t)k^n,\qquad \xi(\t,k):=\sum_{n=1}^\infty k^nt_n.
\end{equation}
And $\tau(\t)$ is called the $\tau$-function, which is expressed by the Wronskian form (Cramer's rule),
\begin{equation}\label{eq:WrA}
\tau(\t)=\text{Wr}(f_1,f_2,\ldots,f_N)=\left|\begin{matrix}
f_1 & f_2 & \cdots & f_N \\
f_1^{(1)}&f_2^{(1)}&\cdots&f_N^{(1)}\\
\vdots &\vdots &\ddots &\vdots\\
f_1^{(N-1)}&f_2^{(N-1)}&\cdots&f_N^{(N-1)}
\end{matrix}\right|.
\end{equation}
For the time-evolution of the functions $f_i(\t)$, we consider the following (heat) hierarchy,
\begin{equation}\label{eq:Df}
\frac{\partial f_i}{\partial t_n}=\frac{\partial^n f_i}{\partial x^n}\qquad \text{for}\quad 1\le i\le N,\quad  n=1,2,\ldots,
\end{equation}
which gives the solution of the Sato equation \eqref{eq:Sato}. Then the solution of the KP hierarchy
can be expressed in terms of the $\tau$-function by \eqref{eq:u-tau}, i.e.
\begin{equation*}
u(\t)=2u_2(\t)=2\frac{\partial w_1(\t)}{\partial x}=2\frac{\partial^2}{\partial x^2}\ln\tau(\t).
\end{equation*}


\subsection{KP solitons}
The soliton solutions are defined by a finite set of exponential solutions of \eqref{eq:Df}. Let $\{f_i(\t):i=1,\ldots,N\}$  be the set of solutions given by
\begin{equation*}\label{eq:KPf}
f_i(\t)=\sum_{j=1}^Ma_{i,j}E_j(\t),\qquad\text{with}\qquad E_j(\t)=e^{\xi_j(\t)} \quad\text{and}\quad \xi_j(\t)=\xi(\t,\kappa_j),
\end{equation*}
where $\kappa\in\R^M$ whose elements are ordered as $(\kappa_1<\cdots<\kappa_M)$, and $A=(a_{i,j})$ is an $N\times M$ real matrix with full rank.
Let $I=\{i_1<i_2<\cdots<i_N\}$ be an $N$ ordered subset of the index set $[M]:=\{1,2,\ldots,M\}$.
Then the $\tau$-function \eqref{eq:WrA} becomes \eqref{eq:tau}, and using the Binet-Cauchy lemma, the $\tau$-function can be expressed by
\begin{equation}\label{eq:tau-Exp}
\tau(\t)=\sum_{I\in\mathcal{M}(A)}\Delta_I(A)E_I(\t),\quad\text{with}\quad E_I(\t)=\prod_{k>l}(\kappa_{i_k}-\kappa_{i_l})E_{i_1}\cdots E_{i_N},
\end{equation}
where $\Delta_I(A)$ is the $N\times N$ minor of $A$ associated with the ordered subset $I$, and $\mathcal{M}(A)$ is the matroid defined by
\begin{equation*}\label{eq:matroid}
\mathcal{M}(A):=\left\{I\in\binom{[M]}{N}:\Delta_I(A)\ne 0\right\}.
\end{equation*}
Note here that the ordering in the parameters $\kappa=(\kappa_1,\ldots,\kappa_M)$ implies $E_I(\t)>0$ for all $I\in\binom{[M]}{N}$ and $\t\in\R^M$.
Then it was shown in \cite{KW:13} that the regular soliton solutions, referred to as \emph{KP solitons}, are obtained if and only if
$A=(a_{i,j})\in\text{Gr}(N,M)_{\ge0}$ is an element of the totally nonnegative Grassmannian, which is defined as
\[
\Gr(N,M)_{\ge0}:=\left\{A\in \Gr(N,M):\Delta_I(A)>0~\text{for all}~I\in\mathcal{M}(A)\right\}.
\]
For example, one line-soliton solution is determined by $\kappa=(\kappa_i,\kappa_j)$ and $A=(1,a)\in\text{Gr}(1,2)_{\ge0}$ for $a>0$, i.e. $\tau(\t)=E_i(\t)+aE_j(\t)$ and
\begin{equation*}\label{eq:1sol}
u(\t)=2\frac{\partial^2}{\partial t_1^2}\ln\tau(\t)=\frac{(\kappa_i-\kappa_j)^2}{2}\sech^2\frac{1}{2}(\xi_i(\t)-\xi_j(\t)+\ln a),
\end{equation*}
which is referred to as a line-soliton solution of $[i,j]$ type, or simply $[i,j]$-soliton.

Then we have the following theorem \cite{CK:08, CK:09}.
\begin{theorem}
Let $\{i_1,\ldots,i_N\}$ be the pivot set and $\{j_1,\ldots,j_{M-N}\}$ be the nonpivot set of $A\in\text{Gr}(N,M)_{\ge0}$.
Then there exists a unique derangement $\pi$ of the symmetric group $S_M$ associated with the matrix $A$, denoted by $\pi(A)\in S_M$, so that
 the KP soliton has the following asymptotic structure.
\begin{itemize}
\item[(a)] For $y\gg 0$, there are $[i_n,\pi(i_n)]$-solitons with $\pi(i_n)>i_n$ for $n=1,\ldots,N$.
\item[(b)] For $y\ll 0$, there are $[\pi(j_m),j_m]$-solitons with $\pi(j_m)<j_m$ for $m=1,\ldots,M-N$.
\end{itemize}
\end{theorem}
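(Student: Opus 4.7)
The plan is to carry out an asymptotic analysis of the $\tau$-function \eqref{eq:tau-Exp} as $y\to\pm\infty$, exploiting that for $A\in\Gr(N,M)_{\ge0}$ with the ordered parameters $\kappa_1<\cdots<\kappa_M$, every coefficient $c_I:=\Delta_I(A)\prod_{k>l}(\kappa_{i_k}-\kappa_{i_l})$ appearing in \eqref{eq:tau-Exp} is strictly positive. Consequently $\tau>0$ everywhere and no cancellation can occur among the exponentials $\exp\Theta_I$, where $\Theta_I=\sum_{i\in I}(\kappa_ix+\kappa_i^2y+\kappa_i^3t)$. The function $u=2\partial_x^2\ln\tau$ is therefore exponentially small except along the boundaries between regions of the $(x,y)$-plane on which different $I$'s dominate. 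On such a boundary, if the two dominant sets differ by a single exchange, say $I_1=J\cup\{i\}$ and $I_2=J\cup\{j\}$ with $i<j$, then factoring out $e^{\Theta_J}$ locally isolates an $[i,j]$-soliton with carrier line $x+(\kappa_i+\kappa_j)y=\mathrm{const}$, in the same form as the one-line-soliton above \eqref{eq:matroid}.

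To read off the asymptotic solitons for $y\to+\infty$, I set $x=vy+O(1)$ and study the linear functional $I\mapsto\sum_{i\in I}(\kappa_iv+\kappa_i^2)$ on the matroid polytope of $\mathcal{M}(A)$. As $v\to-\infty$ the maximizer is the lex-smallest basis, which is precisely the pivot set $\{i_1,\ldots,i_N\}$; as $v\to+\infty$ it is the lex-largest basis. As $v$ sweeps through $\R$ the maximizer changes at a finite sequence of transitions, and by the standard fact that adjacent vertices of a matroid polytope differ by a single element exchange, each transition swaps one index for a larger one and produces exactly one line-soliton. Defining $\pi(i_n)$ to be the (necessarily larger) index that eventually displaces the pivot $i_n$ yields $N$ solitons of type $[i_n,\pi(i_n)]$ with $\pi(i_n)>i_n$. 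A parallel analysis for $y\to-\infty$ (where the $\kappa_i^2y$ contribution flips sign) produces $M-N$ solitons $[\pi(j_m),j_m]$ with $\pi(j_m)<j_m$ indexed by the nonpivots.

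To assemble these two maps into a single derangement $\pi\in S_M$, I check consistency: if the pivot $i_n$ is displaced by the nonpivot $j$ on the $y\gg0$ side, then on the $y\ll0$ side the nonpivot $j$ must be displaced by the pivot $i_n$, so the two partial maps are mutually inverse and together produce a bijection $[M]\to[M]$ with no fixed points. Uniqueness of $\pi(A)$ is then immediate since the asymptotic soliton types determine $\pi$ on pivots and on nonpivots separately.

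The main obstacle is this last consistency step: a self-contained argument would have to track the sequence of optimizers along the matroid polytope in both asymptotic regimes and verify that the exchange sequences observed at $y\to+\infty$ and at $y\to-\infty$ are compatible, which genuinely uses the total nonnegativity of $A$ and not merely the matroid data. A cleaner alternative is to appeal to Postnikov's parameterization of $\Gr(N,M)_{\ge0}$ by decorated permutations (equivalently, $\Le$-diagrams): this manufactures the derangement $\pi(A)$ from the combinatorial type of $A$ a priori, and one then verifies that this $\pi(A)$ is realized by the asymptotic solitons on both sides, which is what references \cite{CK:08,CK:09} effectively carry out.
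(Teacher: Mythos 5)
The paper does not prove this theorem; it is quoted from \cite{CK:08, CK:09} (see also \cite{KW:14}), so your sketch has to be measured against those proofs. Your overall strategy is the right one and is essentially theirs: positivity of all coefficients $\Delta_I(A)\prod_{k>l}(\kappa_{i_k}-\kappa_{i_l})$ for $A\in\Gr(N,M)_{\ge0}$ gives $\tau>0$, the solution localizes on the boundaries between regions where different exponentials $E_I$ dominate, and the asymptotic line solitons are read off by maximizing the linear functional $I\mapsto\sum_{i\in I}(\kappa_i v+\kappa_i^2)$ over the matroid polytope as $v$ sweeps $\R$, with adjacent maximizers differing by a single exchange. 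The identification of the $v\to-\infty$ maximizer with the pivot set is also correct.

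There is, however, a concrete error in your consistency step, and it is precisely the step where the real work lies. You claim that if the pivot $i_n$ is displaced by $j$ as $y\gg0$ then $j$ must be displaced by $i_n$ as $y\ll0$, so that the two partial maps are ``mutually inverse.'' That would force $\pi(\pi(i))=i$, i.e.\ $\pi(A)$ would always be an involution (a product of $2$-cycles). This contradicts Example \ref{ex:48} of the paper, where $\pi=(1,3,2,7,5)(4,6,8)$, and it would rule out the resonant Y-solitons (3-cycles) that are central to the Boussinesq classification in Theorem \ref{thm:B}. The correct statement to be proved is weaker and different: the targets $\{\pi(i_n)\}_{n=1}^N$ of the $y\gg0$ solitons and the targets $\{\pi(j_m)\}_{m=1}^{M-N}$ of the $y\ll0$ solitons are disjoint and together exhaust $[M]$, so that the two partial maps assemble into a single fixed-point-free bijection. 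Establishing this (and the companion fact, which you assert rather than prove, that the exchange sequence for $y\gg0$ has exactly $N$ transitions, one per pivot, with no index entering and leaving the dominant set more than once) genuinely uses total nonnegativity and is carried out in \cite{CK:08} via the Binet--Cauchy expansion together with a careful analysis of the dominant exponentials; your fallback to Postnikov's decorated permutations supplies a candidate $\pi(A)$ but does not by itself show it is the permutation realized by the asymptotics, which is the content of the theorem.
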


It would be useful to express the derangements in the chord diagrams defined as follows (see e.g. \cite{K:17}).
\begin{definition}\label{def:chords}
Consider a line segment with $M$ marked points labeled by the $\kappa$-parameter $(\kappa_1,\ldots,\kappa_M)$.
Then the chord diagram associated with a derangement $\pi$ in the symmetric group $S_M$ is defined by
\begin{itemize}
\item[(a)] if $i<\pi(i)$ (exceedance), then draw a chord joining $\kappa_i$ and $\kappa_{\pi(i)}$ on the upper part of the line, and
\item[(b)] if $j>\pi(j)$ (deficiency), then draw a chord joining $\kappa_j$ and $\kappa_{\pi(j)}$ on the lower part of the line.
\end{itemize}
Note that if the derangement is given by a single $k$-cycle, $\pi=(j_1,\ldots,j_k)$, then all the points $\{\kappa_{j_i};~i=1,\ldots,k\}$ are joined in the chord diagram.
\end{definition}

\begin{example}\label{ex:48}
Let $\pi=(1,3,2,7,5)(4,6,8)\in S_8$ be a permutation in the cycle notation, whose chord diagram is given by
\begin{figure}[H]
  \centering
  \includegraphics[height=2.3cm,width=7.5cm]{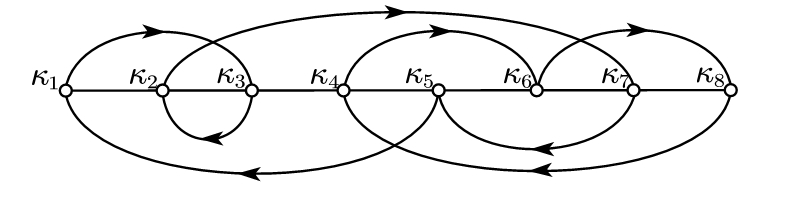}
\end{figure}
\noindent
Then following the method given in \cite{KW:14, K:17}, we have $A\in\text{Gr}(4,8)_{\ge0}$,
\[
A=\begin{pmatrix}
1& & & & a_3a_6a_8& & -a_1a_3a_4a_6a_8 & -b_1\\
 &1 &a_5 & &-a_3a_5a_6&&a_1a_3a_4a_5a_6&b_2 \\
  & & & 1& a_3 & &-a_1a_3a_4&-a_1a_2a_3a_4\\
   & & & &  &1&a_1&a_1a_2
 \end{pmatrix}
 \]
 with $b_1=a_2a_6a_8(a_7+a_1a_3a_4), ~ b_2=a_2a_5a_6(a_7+a_1a_3a_4)$,
 where the blank elements are zero, and all $a_i>0$ for $i=1,\ldots,8$.
 \end{example}

 It was also shown in \cite{C:07, W:05} that the number of free parameters in $A\in\text{Gr}(N,M)_{\ge0}$
 can be found from the chord diagram, and it is given by
 \begin{equation}\label{eq:dimA}
 N+\{\text{\# of crossings}\}+\{\text{\# of cusps in the lower part}\},
 \end{equation}
 which is the dimension of the positroid cell parametrized by the derangement $\pi(A)$ \cite{W:05, KW:14}. The positroid cell decomposition of the \emph{irreducible} totally nonnegative Grassmannian is given by
\[
\Gr(N,M)_{\ge0}^{\text{irr}}=\bigsqcup_{\pi\in D_M}X_\pi,
\]
where $D_M\subset S_M$ is the set of derangements, and the \emph{irreducibility} implies that for each element $A\in\Gr(N,M)_{\ge0}$,
the matrix $A$ satisfies
\begin{itemize}
\item[(a)] there is no zero column,
\item[(b)] there exists at least one nonzero elements in each row besides the pivot.
\end{itemize}
In the example \ref{ex:48}, the dimension $X_\pi$ is given by
\[
\text{dim}(X_\pi)=4+3+1=8,
\]
which is the total number of positive parameters $a_1,\ldots,a_8$.

\subsection{The Gel'fand-Dickey $\ell$-reductions}
Then the Gel'fand-Dickey $\ell$-reduction (sometime called the $\ell$-th generalized KdV hierarchy, see e.g. \cite{KX:21}) is defined by
\[
L^\ell=B_\ell:=(L^\ell)_{\ge 0},
\]
that is, the $\ell$-th power of $L$ becomes a differential operator. This means that the functions $u_i$'s
are determined by $\ell-1$ variables in $L^\ell$ in the form,
\begin{equation}\label{eq:LL}
L^{\ell}=\partial^{\ell}+r_{2}\partial^{\ell-2}+r_{3}\partial^{\ell-3}+\cdots+r_{\ell},
\end{equation}
where the functions $r_{i}$ is given by the differential polynomial of $\{u_{2},\cdots, u_{i}\}$ and their derivatives with respect to $t_1$.
For example, when $\ell=4$, we have
\begin{align*}
r_{2}=4u_{2},\quad\quad\quad r_{3}=6u_{2,t_{1}}+4u_{3},\quad\quad\quad r_{4}=4u_{2, t_{1}t_{1}}+6u_{2}^{2}+6u_{3, t_{1}}+4u_{4}.
\end{align*}

We can also see from $L^\ell=(W\partial^\ell W^{-1})_{\ge 0}$
that these functions $r_n$ have the following form,
\begin{equation}\label{eq:rn}
r_{n+1}=\frac{\ell}{n}\partial_{t_{1}}\partial_{t_{n}}\ln\tau+R_n(r_{2}, \cdots, r_{n})\quad\quad\text{with}\quad\quad 1\leq n \leq \ell-1,
\end{equation}
where $R_n(r_2,\ldots,r_n)$ is the differential polynomial of $\{r_2,\ldots,r_n\}$ with $R_1=0$.
For example, when $\ell=4$, we have
\begin{align*}
r_{2}=4\partial_{t_{1}}^{2}\ln\tau,\quad\quad r_{3}=2\partial_{t_{1}}\partial_{t_{2}}\ln\tau+\partial_{t_{1}}r_{2},\quad\quad r_{4}=\frac{4}{3}\partial_{t_{1}}\partial_{t_{3}}\ln\tau+\frac{1}{2}\partial_{t_{1}}r_{3}-\frac{1}{12}\partial_{t_{1}}^{2}r_{2}+\frac{1}{8}r_{2}^{2}.
\end{align*}
From \eqref{eq:Lax} with \eqref{eq:LL}, the $\ell$-reduction gives the constraints,
\[
\frac{\partial L}{\partial t_{n\ell}}=0\qquad\text{for}\quad n=1,2,\ldots,
\]
that is, all the variables $\{r_i:1\le i\le \ell-1\}$ do not depend on the times $t_{n\ell}$. Following \cite{KX:21}, we define the $\ell$-reduction as follows. Let $\{n_1,\ldots,n_K\}$ be a partition of $N$ with $1\le n_k<\ell$, i.e. $N=n_1+\cdots+n_K$, and define $N_k=n_1+\cdots+n_k$ $(\text{i.e.}~N=N_K)$.
Then the $\ell$-reduction is defined by a condition for the functions $\{f_i:1\le i\le N\}$ in \eqref{eq:Df} given by
\begin{equation}\label{eq:Reduction}
\frac{\partial f_{i}}{\partial t_\ell}=\alpha_k f_{i},\qquad \text{for}\quad N_{k-1}+1\le i\le N_k,\quad 1\le k\le K,
\end{equation}
for some constant $\alpha_k\in\R$ and $N_0=0$. Using \eqref{eq:Df}, this leads to
\[
\frac{\partial f_{i}}{\partial t_\ell}=\frac{\partial^\ell f_{i}}{\partial t_1^\ell}=\alpha_k f_{i}.
\]
Looking for an exponential solution to this equation, we have the degenerate polynomial of $\kappa$, i.e.
\begin{equation}\label{eq:Dpoly}
\kappa^\ell=\alpha_k,
\end{equation}
which has the complex roots, the $\ell$-th roots of $\alpha_k$ (see \cite{KX:21}, where the complex solitons associated with thses roots were discussed).

One should note that the reduction equation \eqref{eq:Reduction} implies that
we have {$f_{i}(\t)=e^{\sum_n\alpha_k^n t_{n\ell}}g_{i}(\t')$} with $\t'=(t_j:j\ne 0~(\text{mod}~\ell))$, which shows that the $\tau$-function is given by $\tau(\t)=\text{Wr}(f_1,\ldots f_N)(\t)=e^{\sum_k\sum_n\alpha_k^nt_{n\ell}}\text{Wr}(g_1,\ldots,g_N)(\t')$. Then the variables $r_n(\t)$ in \eqref{eq:rn} has no dependency on the flow parameters $t_{n\ell}$.

In the next section, we define a versal deformation of the degenerate polynomial \eqref{eq:Dpoly} to find the real and regular solutions.

\section{Spectral curves and the $\tau$-functions for the KP solitons under the $\ell$-reductions}\label{sec:SC}
Here we first introduce the spectral curve defined as a \emph{versal} deformation of the degenerated polynomial \eqref{eq:Dpoly}
in order to obtain a set of exponential functions for the basis of real solitons. Then we construct the $\tau$-function for a particular basis of exponential functions and describe the corresponding soliton solution in terms of the permutation.

\subsection{The spectral curve and the $\tau$-function}
We consider a \emph{versal} deformation of the $\ell$-th degree polynomial \eqref{eq:Dpoly} given by
\begin{equation}\label{eq:Vpoly}
\Phi_\ell(\kappa,\alpha):=\varphi_\ell(\kappa)-\alpha=0\qquad\text{with}\qquad \varphi_\ell(\kappa):=\kappa^\ell-\sum_{j=1}^{\ell-1}c_j\kappa^j,
\end{equation}
where $c_j$ are real constants (deformation parameters).  We particularly choose $c_j$'s , so that \eqref{eq:Vpoly} has $\ell$ distinct real roots $\kappa_j[\alpha]$, which is \eqref{eq:SC}, i.e. the versality implies that
\[
\Phi_\ell(\kappa,\alpha)=\prod_{j=1}^\ell(\kappa-\kappa_j[\alpha])=0.
\]
We call this \emph{spectral curve} for the KP soliton under the $\ell$-reduction.

We note that the versal deformation \eqref{eq:Vpoly} can be realized by the change of coordinates,
{\begin{equation}\label{eq:change}
t_n~\rightarrow~\left\{
\begin{array}{lll}
t_n+c_mt_{k\ell},&\quad\text{if}~n=m+(k-1)\ell\quad\text{with}\quad 1\le m\le \ell-1\quad\text{and}\quad k\geq1,\\
t_{k\ell},&\quad \text{if}~n=k\ell\quad\text{with}\quad k\geq1.
\end{array}\right.
\end{equation}}
In the new coordinates, the derivatives are
{\begin{equation*}\label{eq:Dchange}
\frac{\partial}{\partial t_n}~\rightarrow~\left\{
\begin{array}{lll}
\frac{\partial}{\partial t_n},&\quad\text{if}~n\neq k\ell\quad\text{with}\quad k\geq1,\\
\frac{\partial}{\partial t_{k\ell}}+\sum_{m=1}^{\ell-1}c_m\frac{\partial}{\partial t_{(k-1)\ell+m}},&\quad \text{if}~n=k\ell\quad\text{with}\quad k\geq1.
\end{array}\right.
\end{equation*}}
Then we impose that the condition for the $\ell$-reduction in the new coordinates becomes
\[
\frac{\partial^\ell f_i}{\partial t_1^\ell}-\sum_{n=1}^{\ell-1}c_n\frac{\partial^n f_i}{\partial t_1^n}=\alpha f_i.
\]
As a particular solution of this equation, we consider an exponential function $f(\t)=\exp(\kappa t_1)$. Then the parameter $\kappa$ satisfies $\Phi_\ell(\kappa,\alpha)=0$, hence we have $\ell$ independent exponential solutions,
\[
E_j(\t,\alpha):=\exp\left(\sum_{n=1}^\ell \kappa_j[\alpha]^nt_n\right),\qquad\text{for}\quad j=1,\ldots,\ell.
\]
Notice that in the original coordinates in \eqref{eq:change}, we have
\begin{equation}\label{eq:E}
E_j(\t,\alpha)=e^{\alpha t_\ell}\exp\left(\sum_{n=1}^{\ell-1} \kappa_j[\alpha]^nt_n\right)=:e^{\alpha t_\ell}E_j(\hat{\t},\alpha),
\end{equation}
i.e. $E_j(\t,\alpha)$ is a solution of the $\ell$-reduction \eqref{eq:Reduction}, and $\hat{\t}=(t_1,\ldots,t_{\ell-1})$.

Then the $\tau$-function in \eqref{eq:tau-Exp} is given by
\begin{equation}\label{eq:tau-alpha}
\tau(\hat\t,\alpha)=\sum_{I\in\mathcal{M}(A[\alpha])}\Delta_I(A[\alpha])E_I(\hat\t,\alpha),
\end{equation}
where
\[
E_I(\hat\t,\alpha)=\text{Wr}(E_{i_1},\ldots,E_{i_N})=\prod_{j>k}\left(\kappa_{i_j}[\alpha]-\kappa_{i_k}[\alpha])\right)\cdot \prod_{j=1}^NE_{i_j}(\hat\t,\alpha).
\]

\subsection{Non-crossing permutations and the main theorem}
Here we consider several values of $\alpha\in\R$, say $\alpha_k$ for $k=1,\ldots,K$.
For each $\alpha=\alpha_k$, let $\mathcal{I}[\alpha_k]$ be a subset of $[\ell]=\{1,\ldots,\ell\}$ with {$2\leq|\mathcal{I}[\alpha_k]|=m_k\le \ell$}. We then consider $\Phi_\ell(\kappa,\alpha_k)=0$
in \eqref{eq:Vpoly} for $k=1,\ldots,K$, and define the soliton parameter $\kappa\in\R^M$,
which is the \emph{ordered} set of the roots of $\Phi_\ell(\kappa,\alpha_k)=0$,
\[
\kappa:=\text{ord}\left(\bigcup_{k=1}^K\left\{\kappa_j[\alpha_k]: j\in\mathcal{I}[\alpha_k]\,\right\}\right)=:(\kappa_1<\kappa_2<\cdots<\kappa_M),
\]
where $M=m_1+\cdots+m_K$. Since all $\kappa_j[\alpha_k]$ are distinct, this gives that for each $m\in[M]:=\{1,\ldots,M\}$, there exists a unique pair $(j,k)$ so that $\kappa_m=\kappa_j[\alpha_k]$, which gives a bijection $m=\phi(j,k)$ for each $j\in\mathcal{I}[\alpha_k]$. We also write
\begin{equation}\label{eq:IJ}
\hat{\mathcal{I}}[\alpha_k]:=\phi(\mathcal{I}[\alpha_k]):=\{\, m=\phi(j,k): j\in\mathcal{I}[\alpha_k]\,\}.
\end{equation}
With this ordering, we define
\begin{equation*}\label{eq:f}
f_i(\t,\alpha_k)=\sum_{j\in{\hat{\mathcal{I}}}[\alpha_k]}a_{i,j}(\alpha_k)E_j(\t),\quad \text{for}\quad N_{k-1}+1\le i\le N_k,
\end{equation*}
where $N_k=n_1+\cdots+n_k$ with $1\le n_k\le m_k-1$. Note that $E_j(\t)=\exp(\sum_{n=1}^\infty\kappa_j^nt_n)$.
Here we also take $A[\alpha_k]:=(a_{i,j}(\alpha_k))\in\Gr(n_k,m_k)_{\ge0}$, and let $\pi(A[\alpha_k])$ be the corresponding permutation in the cycle notation,
\[
\pi(A[\alpha_k])=\prod_{p=1}^{P_k}(j^{(p)}_1,\ldots,j^{(p)}_{k_p}),
\]
where $P_k$ is the number of cycles, and $(j^{(p)}_1,\ldots,j^{(p)}_{k_p})$ is a $k_p$-cycle for $j^{(p)}_i\in\hat{\mathcal{I}}[\alpha_k]$.

Let $N=N_K=n_1+\cdots+n_K$. Then we define an $N\times M$ matrix combining all the matrices $A[\alpha_k]$ for $k=1,\ldots,K$, which we call a $\kappa$-direct sum.
\begin{definition}\label{def:SumA}
A \emph{$\kappa$-direct sum} of the matrices $A[\alpha_k]$ for $k=1,\ldots,K$ is defined by
\[
\widehat{\bigoplus}_{k=1}^KA[\alpha_k]:=\left(a_{i,m}\right)_{1\le i\le N,~ 1\le m\le M},
\]
where $m=\phi(j,k)$ with $j\in\mathcal{I}[\alpha_k]$ for $1\le k\le K$. In particular, the row index $i$ is assigned so that the direct sum is an element of $\Gr(N,M)$,
 i.e.
 it is in the reduced row echelon form (RREF). Notice that in general, it is not totally nonnegative.
 \end{definition}

\begin{example}\label{ex:17}
Consider a case with $\ell=7$, and two different $\alpha$'s in the order $\alpha_1<\alpha_2$. Take
\[
\mathcal{I}[\alpha_1]=\{1,4,5,7\},\qquad \mathcal{I}[\alpha_2]=\{1,2,4,5,6\}.
\]
The spectral curve can be expressed as in Figure \ref{fig:SC7}.
\begin{figure}[H]
  \begin{minipage}[t]{1\linewidth}
  \centering
  \includegraphics[height=3cm,width=11cm]{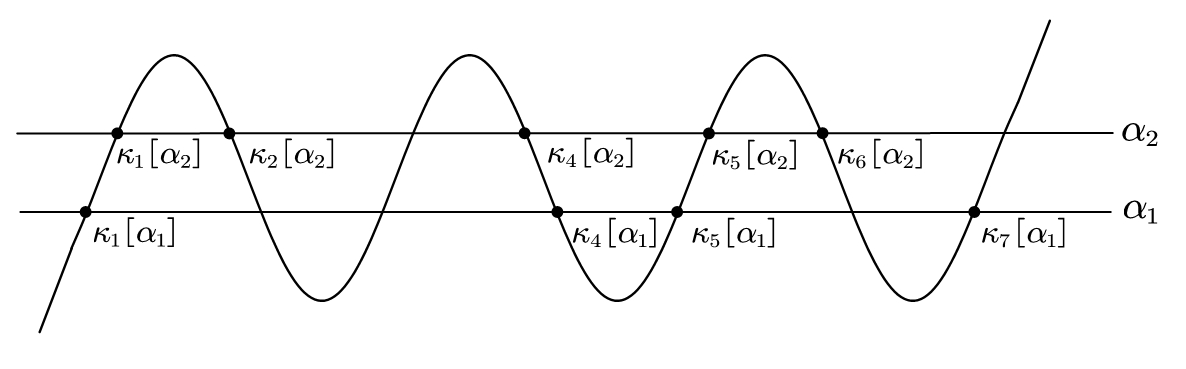}
  \end{minipage}
  \caption{The spectral curve $\Phi_7(\kappa,\alpha)=0$ with the roots $\kappa_j[\alpha_i]$ for $i=1,2$.\label{fig:SC7}}
\end{figure}
\noindent These roots satisfy the following order, giving the bijection $\phi:\kappa_{j}[\alpha_k]\to \kappa_m$,
\[
\kappa_1=\kappa_1[\alpha_1]<\kappa_1[\alpha_2]<\kappa_2[\alpha_2]<\kappa_4[\alpha_2]<\kappa_4[\alpha_1]<\kappa_5[\alpha_1]<\kappa_5[\alpha_2]<\kappa_6[\alpha_2]<\kappa_7[\alpha_1]=\kappa_9,
\]
which gives
\[
\hat{\mathcal{I}}[\alpha_1]=\{1,5,6,9\},\qquad \hat{\mathcal{I}}[\alpha_2]=\{2,3,4,7,8\}.
\]
As an example, consider the following permutations,
\begin{align*}
\pi(A[\alpha_1])=(1,6,9,5),\qquad \pi(A[\alpha_2])=(2,3,4)(7,8),
\end{align*}
whose chord diagram is
\begin{figure}[H]
  \begin{minipage}[t]{1\linewidth}
  \centering
  \includegraphics[height=2.2cm,width=7cm]{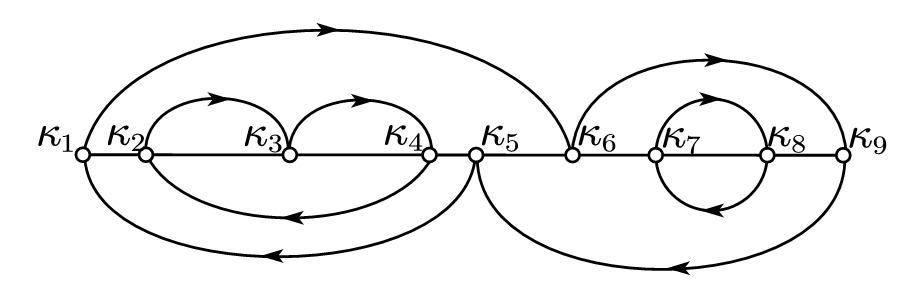}
  \end{minipage}
\end{figure}
\noindent The corresponding matrices $A[\alpha_1]\in\text{Gr}(2,4)_{\ge0}$ and $A[\alpha_2]\in\text{Gr}(3,5)_{\ge 0}$ are given by
\[
A[\alpha_1]=\begin{pmatrix}
1&a& &-b\\
 & &1&c
 \end{pmatrix},\qquad\text{and}\qquad
 A[\alpha_2]=\begin{pmatrix}
 1& &-d& & \\ &1 &e& & \\ & & &1 &f
 \end{pmatrix},
 \]
 where $a, b,\ldots,f$ are positive constants,
 (see e.g. \cite{K:17} for the method to construct $A\in\text{Gr}(N,M)_{\ge0}$ from permutation $\pi(A)\in S_M$).
  The $\kappa$-direct sum $A[\alpha_1]\hat\oplus A[\alpha_2]$ is then given by
 \[
A[\alpha_1]\hat\oplus A[\alpha_2]=\begin{pmatrix}
1& & & &a & & & &-b\\
 & 1& & -d& & & & & \\
  & &1 &e& & & & & \\
   & & & & &1 & & &c\\
   & & & & & &1 &f &\\
   \end{pmatrix} \in \Gr(5,9).
   \]
Note that this is not totally nonnegative, e.g. $\Delta_{2,3,6,7,9}=-b<0$. This is due to the relocation of the column vectors.
 \end{example}

We now define the notion of \emph{non-crossing} of the permutations.
\begin{definition}\label{def:NC}
Let $\pi^{(p)}= (j^{(p)}_1,\ldots,j^{(p)}_{k_p})$ and $\pi^{(q)}=(j^{(q)}_1,\ldots,j^{(q)}_{k_q})$ be
two permutations in cycle notation. Then we define
\begin{itemize}
\item[(a)] {Two permutations $\pi^{(p)}$ and $\pi^{(q)}$ are \emph{non-crossing}, if the corresponding chord diagrams have no crossing chords between two permutations.}
\item[(b)] Two matrices $A[\alpha_i]$ and $A[\alpha_j]$ for $i\ne j$ are \emph{non-crossing}, if the corresponding permutations $\pi(A[\alpha_i])$ and $\pi(A[\alpha_j])$ are
{non-crossing}.
\end{itemize}
\end{definition}
Note that the matrices $A[\alpha_1]$ and $A[\alpha_2]$ in Example \ref{ex:17} are non-crossing.


An immediate consequence of this definition is the following.
\begin{proposition}\label{lem:crossing}
If $A[\alpha_i]\in\Gr(n_i,m_i)_{\ge0}$ and $A[\alpha_j]\in\Gr(n_j,m_j)_{\ge0}$ for $\alpha_i\ne\alpha_j$ are \emph{not} non-crossing,
then the $\kappa$-direct sum $A[\alpha_i]\hat\oplus A[\alpha_j]$ is not totally nonnegative.
\end{proposition}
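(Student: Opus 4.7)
My plan is to show directly that a crossing configuration forces a negative maximal minor of the $\kappa$-direct sum. The idea is to decompose each such minor via a block-Laplace identity and then to choose a column subset for which the sign factor becomes $-1$ while the block minors remain positive. Without loss of generality, assume the crossing involves two upper chords: $(p,q)$ in $\pi(A[\alpha_i])$ with $p\in I_i$, $\pi(p)=q$, $p<q$, and $(r,s)$ in $\pi(A[\alpha_j])$ with $r\in I_j$, $\pi(r)=s$, $r<s$, satisfying $p<r<q<s$; the lower-lower case is handled symmetrically by reversing orientation. Here $I_i, I_j$ denote the pivot sets of $A[\alpha_i], A[\alpha_j]$ and $I=I_i\cup I_j$ is the pivot set of the direct sum.

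Because each row of the $\kappa$-direct sum has support in exactly one of the blocks $\hat{\mathcal{I}}[\alpha_i]$ or $\hat{\mathcal{I}}[\alpha_j]$, a block Laplace expansion gives
\[
\Delta_J\bigl(A[\alpha_i]\,\hat\oplus\,A[\alpha_j]\bigr)=\varepsilon(J)\,\Delta_{J_i}(A[\alpha_i])\,\Delta_{J_j}(A[\alpha_j]),
\]
where $J_i=J\cap\hat{\mathcal{I}}[\alpha_i]$, $J_j=J\cap\hat{\mathcal{I}}[\alpha_j]$, and $\varepsilon(J)=(-1)^{\mathrm{inv}(J_i,J_j)+\mathrm{inv}(I_i,I_j)}$ with $\mathrm{inv}(X,Y)=|\{(x,y)\in X\times Y:x>y\}|$. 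Both block minors are nonnegative by total nonnegativity of $A[\alpha_i]$ and $A[\alpha_j]$. I will test this formula on three candidates: $J_A=(I\setminus\{p\})\cup\{q\}$, $J_B=(I\setminus\{r\})\cup\{s\}$, and $J_C=(I\setminus\{p,r\})\cup\{q,s\}$. A short computation yields $\varepsilon(J_A)=(-1)^a$, $\varepsilon(J_B)=(-1)^b$, and $\varepsilon(J_C)=(-1)^{a+b+1}$, where $a=|I_j\cap(p,q)|\ge 1$ (since $r$ lies in this set) and $b=|I_i\cap(r,s)|$. The three exponents sum to $2a+2b+1$, which is odd, so at least one of the three signs must equal $-1$.

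The remaining, and technically hardest, step is to guarantee that for at least one candidate $J$ with $\varepsilon(J)=-1$, both block minors are strictly positive; equivalently, that the exchanged set $(I_i\setminus\{p\})\cup\{q\}$ or $(I_j\setminus\{r\})\cup\{s\}$ is a basis of the relevant positroid. When $(p,q)$ and $(r,s)$ are $2$-cycles this is immediate, since the non-pivot columns at $q,s$ are scalar multiples of the pivot columns at $p,r$ and total nonnegativity of $A[\alpha_i], A[\alpha_j]$ fixes the signs of these scalars. For longer cycles the naive basis exchange $p\leftrightarrow q$ can fail at the matroid level because the circuit of the non-pivot $q$ need not contain $p$, and my plan is to first pass to an innermost crossed chord pair — a crossing with minimal enclosing intervals $(p,q)$ and $(r,s)$ — and then to adjust $p$ (resp.~$r$) to the pivot in its cycle whose positroid circuit does contain $q$ (resp.~$s$). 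Such an adjustment alters $a$ and $b$ by known amounts, so the three-way parity argument of the previous paragraph still supplies a $J$ with $\varepsilon(J)=-1$ and positive block minors; the corresponding $\Delta_J$ is then strictly negative, proving that the $\kappa$-direct sum fails total nonnegativity.
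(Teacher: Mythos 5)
Your strategy is genuinely different from the paper's: the paper settles this in three lines by a dimension count, comparing the number of free parameters of the direct sum (the sum of those of $A[\alpha_i]$ and $A[\alpha_j]$) with the dimension of the positroid cell of the product permutation, which is strictly larger because of the extra crossing counted by \eqref{eq:dimA}. Your plan of exhibiting an explicit negative maximal minor is more concrete, and its first two steps are correct: the block-Laplace factorization with $\varepsilon(J)=(-1)^{\mathrm{inv}(J_i,J_j)+\mathrm{inv}(I_i,I_j)}$ is right (it reproduces, for instance, $\Delta_{2,3,6,7,9}=-b$ in Example \ref{ex:17}), and for the configuration $p<r<q<s$ the identity $\varepsilon(J_A)\varepsilon(J_B)\varepsilon(J_C)=(-1)^{2a+2b+1}=-1$ does force at least one of the three candidates to carry a negative sign.

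The gap is in the final step, and it is not cosmetic. First, the candidates $J_A,J_B,J_C$ are column sets of cardinality $N$ only if $q$ and $s$ are nonpivots, but for a crossing of two upper chords the right endpoints $q=\pi(p)$ and $s=\pi(r)$ need not be nonpivots: inside a $k$-cycle with $k\ge 3$ an upper chord can join two exceedances, i.e., two pivots (e.g., the chord $1\to 3$ in the cycle $(1,3,5)$), and then $(I\setminus\{p\})\cup\{q\}$ has cardinality $N-1$ and the construction does not start. Second, even when $q,s$ are nonpivots, your repair --- replacing $p$ by a pivot $p'$ in the fundamental circuit of $q$, and $r$ by $r'$ similarly --- does not preserve the parity identity. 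That identity came precisely from the fact that exactly one of $p,q$ lies in the open interval $(r,s)$ (namely $q$, since $p<r<q<s$); a direct recomputation gives $\varepsilon(J_A')\varepsilon(J_B')\varepsilon(J_C')=(-1)^{[q\in(r',s)]+[p'\in(r',s)]}$, which equals $+1$ whenever the new pivot $p'$ also lands inside $(r',s)$ (or both land outside), and then all three adjusted candidates can have positive sign. So the assertion that the adjusted triple ``still supplies a $J$ with $\varepsilon(J)=-1$ and positive block minors'' is exactly the statement that needs proof and is not supplied. To close the argument you would have to show that among the exchanges that are simultaneously basis exchanges in both matroids there is always one of odd sign --- or fall back on the paper's dimension count via \eqref{eq:dimA}.
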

\begin{proof}
Recall that the dimension of the totally nonnegative cell can be computed from the formula \eqref{eq:dimA}.
Note that the $\kappa$-direct sum $A[\alpha_i]\hat\oplus A[\alpha_j]$ has an extra crossing.
If $A[\alpha_i]\hat\oplus A[\alpha_j]$ is totally nonnegative, then the number of free parameter (or dimension) in the $\kappa$-direct sum should be
more that the sum of the free parameters in $A[\alpha_i]$ and $A[\alpha_j]$.
\end{proof}

In order to characterize the non-crossing matrices, let us first define the following set $\mathsf{P}(A[\alpha_i])$ for the matrix $A[\alpha_i]\in\text{Gr}(n_i,m_i)$,
\begin{equation*}\label{def:minors}
\mathsf{P}(A[\alpha_i]):=\left\{\Delta_I(A[\alpha_i])\ne0:I\in\mathcal{M}(A[\alpha_i])\right\},
\end{equation*}
where $\mathcal{M}(A[\alpha_i])$ is the matroid of $A[\alpha_i]$, i.e.
\begin{equation*}\label{def:matroid}
\mathcal{M}(A[\alpha_i]):=\left\{I\in\binom{[m_i]}{n_i}:\Delta_I(A[\alpha_i])\ne 0\right\}.
\end{equation*}
Then the following lemma is immediate by the definition of the $\kappa$-direct sum $A[\alpha_i]\hat\oplus A[\alpha_j]$ and using the Laplace expansion for the minors.
\begin{lemma}\label{lem:matP}
\[
\mathsf{P}\left(A[\alpha_i]\hat\oplus A[\alpha_j]\right)=\left\{\Delta_{I}(A[\alpha_i])\Delta_J(A[\alpha_j]):I\in\mathcal{M}(A[\alpha_i]),~J\in\mathcal{M}(A[\alpha_j])\right\}.
\]
\end{lemma}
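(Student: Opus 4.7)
The plan is to exploit the block structure underlying the $\kappa$-direct sum. Set $B:=A[\alpha_i]\hat\oplus A[\alpha_j]$, which is an $N\times M$ matrix with $M=m_i+m_j$ and $N=n_i+n_j$. By construction, every entry of $B$ in a column indexed by $\hat{\mathcal{I}}[\alpha_i]$ that lies in an $A[\alpha_j]$-row vanishes, and symmetrically for $\hat{\mathcal{I}}[\alpha_j]$-columns in $A[\alpha_i]$-rows. Consequently, after a suitable pair of row and column permutations that cluster the $A[\alpha_i]$-data into the top-left corner, $B$ is turned into the genuine block-diagonal matrix $\mathrm{diag}(A[\alpha_i],A[\alpha_j])$.

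The next step is to fix $K\in\binom{[M]}{N}$ and split it as $K=I'\sqcup J'$ with $I':=K\cap\hat{\mathcal{I}}[\alpha_i]$ and $J':=K\cap\hat{\mathcal{I}}[\alpha_j]$. Pulling these subsets back through the bijection $\phi$ of \eqref{eq:IJ} yields $I\subseteq[m_i]$ and $J\subseteq[m_j]$. If $|I|>n_i$ then the $|I|$ columns of $B$ indexed by $I'$ have their supports confined to only $n_i$ rows, so the $N\times N$ submatrix picked out by $K$ is automatically rank-deficient; by symmetry the same occurs when $|J|>n_j$. Hence $\Delta_K(B)=0$ unless $|I|=n_i$ and $|J|=n_j$, in which case the Laplace expansion along the $A[\alpha_i]$-rows (equivalently, the block-determinant identity applied after the row/column permutation above) yields
\[
\Delta_K(B) \;=\; \epsilon(I',J')\,\Delta_I(A[\alpha_i])\,\Delta_J(A[\alpha_j]),
\]
where $\epsilon(I',J')=\pm 1$ records the sign of the shuffle.

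From this one reads off that a maximal minor of $B$ is nonzero precisely when $I\in\mathcal{M}(A[\alpha_i])$ and $J\in\mathcal{M}(A[\alpha_j])$, and that its value coincides, up to the explicit sign $\epsilon(I',J')$, with the product $\Delta_I(A[\alpha_i])\,\Delta_J(A[\alpha_j])$. Collecting these values across all balanced decompositions gives the claimed equality of $\mathsf{P}(B)$ with $\{\Delta_I(A[\alpha_i])\Delta_J(A[\alpha_j]):I\in\mathcal{M}(A[\alpha_i]),\,J\in\mathcal{M}(A[\alpha_j])\}$.

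The argument is essentially bookkeeping — nothing lies deeper than the block-diagonal form combined with the Laplace expansion — which is why the authors label the lemma immediate. The only step demanding attention is verifying that $\Delta_K(B)$ is forced to vanish unless $K$ is balanced across the two index blocks; this is the support argument above, and it will be the natural place to locate any subtlety. Tracking the sign $\epsilon(I',J')$ is routine for the set-level statement here, though it will re-enter the story when Proposition \ref{lem:crossing} and its converse are used to read off total nonnegativity of $B$ from the non-crossing condition.
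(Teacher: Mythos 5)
Your argument is correct and is exactly the fleshed-out version of what the paper intends: the paper offers no written proof beyond declaring the lemma ``immediate by the definition of the $\kappa$-direct sum and the Laplace expansion,'' and your support/rank argument for why unbalanced column choices give vanishing minors, followed by the block Laplace expansion, is precisely that reasoning made explicit. Your remark that the equality only holds up to the shuffle sign $\epsilon(I',J')$ is a fair observation of a looseness in the paper's statement, and you correctly note that this sign is what Proposition \ref{prop:noncrossing} later absorbs by adjusting the entries of $A[\alpha_i]$ and $A[\alpha_j]$.
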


Let $A[\alpha_i,\alpha_j]$ be the totally non-negative matrix, i.e. $A[\alpha_i,\alpha_j]\in\Gr(n_i+n_j,m_i+m_j)_{\ge 0}$,
whose permutation is given by
\[
\pi(A[\alpha_i,\alpha_j])=\pi(A[\alpha_i])\cdot \pi(A[\alpha_j]).
\]
This implies that the set of the asymptotic solitons in $|y|\gg0$ generated by $A[\alpha_i,\alpha_j]$ is the sum of these solitons generated by
 $A[\alpha_i]$ and $A[\alpha_j]$. Then we can show that Lemma \ref{lem:matP} implies
the following.
\begin{corollary}\label{coro:Aij}
The set of the dominant exponentials in the soliton solution generated by the totally nonnegative matrix
$A[\alpha_i,\alpha_j]$ is the same as that of the solution generated by the matrix $A[\alpha_i]\hat\oplus A[\alpha_j]$.
\end{corollary}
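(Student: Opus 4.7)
The plan is to reduce the statement to an equality of matroids: once the $\tau$-functions of $A[\alpha_i,\alpha_j]$ and of $A[\alpha_i]\hat\oplus A[\alpha_j]$ are supported on the same collection of exponentials $\{E_I(\t)\}$, the identification of the dominant terms in any asymptotic region of $\t$-space is unaffected by the signs of the coefficients and depends only on the $\kappa$-parameters. Signs of $\Delta_I$ matter for regularity (they were already the obstruction in Proposition \ref{lem:crossing}), but they do not move which exponential is the leading one at infinity.

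First I would apply Lemma \ref{lem:matP} directly to describe
\[
\mathcal{M}\bigl(A[\alpha_i]\hat\oplus A[\alpha_j]\bigr)=\bigl\{\,I\sqcup J\subset[M]\ :\ I\in\mathcal{M}(A[\alpha_i]),\ J\in\mathcal{M}(A[\alpha_j])\,\bigr\},
\]
where we identify $I\subset\hat{\mathcal{I}}[\alpha_i]$ and $J\subset\hat{\mathcal{I}}[\alpha_j]$ via the bijection $\phi$ in \eqref{eq:IJ}. Because $\hat{\mathcal{I}}[\alpha_i]$ and $\hat{\mathcal{I}}[\alpha_j]$ are disjoint subsets partitioning $[M]$, every nonzero minor of the $\kappa$-direct sum factors through a Laplace expansion along these two column blocks, giving exactly this description.

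Next I would establish that the positroid of the totally nonnegative matrix $A[\alpha_i,\alpha_j]$ coincides with the same set. The key input is that its defining derangement $\pi(A[\alpha_i])\cdot\pi(A[\alpha_j])$ is ``block-diagonal'': the two factors have disjoint supports $\hat{\mathcal{I}}[\alpha_i]$ and $\hat{\mathcal{I}}[\alpha_j]$, so the permutation acts independently on each block. By the Postnikov parametrization (already invoked in the paper through the dimension formula \eqref{eq:dimA} and the cell decomposition $\Gr(N,M)_{\ge0}^{\mathrm{irr}}=\bigsqcup_{\pi\in D_M}X_\pi$), the positroid attached to a product of derangements on disjoint blocks is the direct sum of the two individual positroids, i.e.\ precisely the set of disjoint unions $I\sqcup J$ above. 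Consequently $\mathcal{M}(A[\alpha_i,\alpha_j])=\mathcal{M}(A[\alpha_i]\hat\oplus A[\alpha_j])$.

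With this matroid equality in hand, the two $\tau$-functions both take the form $\tau=\sum_{I\in\mathcal{M}}\Delta_I\,E_I(\hat\t)$ with exactly the same index set $\mathcal{M}$; only the signs of the individual $\Delta_I$ may differ, because the direct sum has reordered columns. Since for each asymptotic direction the exponent $\ln E_I(\hat\t)$ is a function of the $\kappa_j$'s alone, the subset of $\mathcal{M}$ realizing the pointwise maximum of $\ln E_I$ in a given region is common to both $\tau$-functions, which is exactly the claim. The principal obstacle I anticipate is the positroid identification in the second step: one must verify carefully that the Grassmann necklace or, equivalently, the bounded affine permutation of the product $\pi(A[\alpha_i])\cdot\pi(A[\alpha_j])$ does decompose as stated. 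This is combinatorial bookkeeping rather than a conceptual hurdle, but it is the only place where the geometry of positroid cells (as opposed to the explicit linear-algebraic Laplace expansion used for the direct sum) is essential.
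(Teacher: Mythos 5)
Your reduction to an equality of matroids breaks down at its central step. The identity $\mathcal{M}(A[\alpha_i,\alpha_j])=\mathcal{M}\bigl(A[\alpha_i]\hat\oplus A[\alpha_j]\bigr)$ holds only when $\pi(A[\alpha_i])$ and $\pi(A[\alpha_j])$ are non-crossing, and the corollary carries no such hypothesis: the non-crossing case is only treated afterwards in Proposition \ref{prop:noncrossing}, and the remark immediately following the corollary (``these solitons generated by the $\kappa$-direct sum are singular in general'') makes clear that the crossing case is included. A concrete counterexample: take $\pi(A[\alpha_i])=(1,3)$ and $\pi(A[\alpha_j])=(2,4)$. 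The product $(1,3)(2,4)$ labels the top cell of $\Gr(2,4)_{\ge0}$ (dimension $2+1+1=4$ by \eqref{eq:dimA}), so all six Pl\"ucker coordinates of $A[\alpha_i,\alpha_j]$ are positive, whereas the direct sum $\left(\begin{smallmatrix}1&0&a&0\\0&1&0&b\end{smallmatrix}\right)$ has $\Delta_{13}=\Delta_{24}=0$. The extra crossings counted by \eqref{eq:dimA} --- precisely the bookkeeping you defer to the Grassmann necklace --- force the positroid of the product permutation to be strictly larger than the direct sum of the two positroids. Worse, the conclusion you extract from the (false) matroid equality, namely that the dominant exponentials agree \emph{everywhere}, also fails in this example: the T-type solution has $E_{13}$ or $E_{24}$ dominant in a bounded intermediate region, and these exponentials are simply absent from the direct sum's $\tau$-function.

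What the corollary actually asserts, and what the paper proves, is the coincidence of the dominant exponentials in the \emph{asymptotic} regions $x^2+y^2\to\infty$, i.e.\ of the asymptotic line-soliton structure. The paper's argument is local rather than global: if $E_{\hat I\cup\hat J}$ dominates in some unbounded region, then passing to an adjacent dominant region changes exactly one index of the minor (for generic $\kappa$, since $\kappa_k+\kappa_l\ne\kappa_{k'}+\kappa_{l'}$), and by Lemma \ref{lem:matP} that single index change occurs entirely inside $I$ or entirely inside $J$; hence every asymptotic soliton of the direct-sum solution is an asymptotic soliton of $\pi(A[\alpha_i])$ or of $\pi(A[\alpha_j])$, which is exactly the asymptotic data encoded by $\pi(A[\alpha_i,\alpha_j])=\pi(A[\alpha_i])\,\pi(A[\alpha_j])$ via the Chakravarty--Kodama classification. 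If you restrict your matroid argument to the non-crossing case you are essentially re-deriving Proposition \ref{prop:noncrossing}; in the crossing case no argument that proceeds through global matroid equality can succeed, so you need an asymptotic transition argument of the paper's type.
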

\begin{proof}
 From Lemma \ref{lem:matP}, one can assume that
there exists a minor $\Delta_I(A[\alpha_i])\Delta_J(A[\alpha_j])$ such that the corresponding exponential $E_{\hat I\cup \hat J}$ with $(\hat{I}=\phi(I), \hat{J}=\phi(J))$
by $\phi$ in\eqref{eq:IJ}
is dominant in some asymptotic region of $xy$-plane with large $x^2+y^2$. Then moving $x$-coordinate, the following two cases are possible,
\begin{itemize}
\item[(a)] there exists $I'\in\mathcal{M}(A[\alpha_i])$ with $|I\setminus I'|=1$, so that $E_{\hat I'\cup \hat J}$ dominates over $E_{\hat I\cup \hat J}$,
\item[(b)] there exists $J'\in\mathcal{M}(A[\alpha_j])$ with $|J\setminus J'|=1$, so that $E_{\hat I\cup \hat J'}$ dominates over $E_{\hat I\cup \hat J}$,
\end{itemize}
where we have used the fact that there is only one index change in the minor at the boundary of two dominant regions (more precisely, if $\kappa_k+\kappa_l\ne\kappa_{k'}+\kappa_{l'}$ for $(k,l)\ne(k',l')$, see \cite{CK:08} for the details).
The case (a) implies that there is an asymptotic soliton given in $\pi(A[\alpha_i])$, and the case (b) shows the
existence of the soliton in $\pi(A[\alpha_j])$.
\end{proof}

 Note that these solitons generated by the $\kappa$-direct sum $A[\alpha_i]\hat\oplus A[\alpha_j]$ are singular in general.
Now we can show the following proposition.
\begin{proposition}\label{prop:noncrossing}
Let $A[\alpha_i]\in\Gr(n_i,m_j)_{\ge0}$ and $A[\alpha_j]\in\Gr(n_j,m_j)_{\ge0}$ are non-crossing. Then the $\kappa$-direct sum $A[\alpha_i]\hat\oplus A[\alpha_j]$ becomes totally nonnegative by adjusting the signs in the nonzero entries in $A[\alpha_i]$ and $A[\alpha_j]$.
\end{proposition}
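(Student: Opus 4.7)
The plan is to analyze each maximal minor of $A[\alpha_i]\hat\oplus A[\alpha_j]$ as a signed product of minors of the two summands, and then use the non-crossing hypothesis to realize the required signs by flipping selected nonzero entries. First, I would invoke Lemma \ref{lem:matP}: any nonvanishing minor has column set $K=\hat I\cup\hat J$ with $\hat I\in\mathcal{M}(A[\alpha_i])$ and $\hat J\in\mathcal{M}(A[\alpha_j])$. Because rows of the direct sum inherited from $A[\alpha_i]$ vanish outside $\hat{\mathcal{I}}[\alpha_i]$ and vice versa, a block Laplace expansion (after sorting the $N$ rows into $A[\alpha_i]$-rows followed by $A[\alpha_j]$-rows, and the columns of $K$ into $\hat I$ followed by $\hat J$) yields
\[
\Delta_K\bigl(A[\alpha_i]\hat\oplus A[\alpha_j]\bigr)=\varepsilon(\hat I,\hat J)\,\Delta_I(A[\alpha_i])\,\Delta_J(A[\alpha_j]),
\]
where $\varepsilon(\hat I,\hat J)$ is the product of a global row-shuffle sign (determined by the RREF pivot arrangement and independent of $K$) and the column-shuffle sign $(-1)^{c(\hat I,\hat J)}$ with $c(\hat I,\hat J)=\#\{(a,b):a\in\hat J,\,b\in\hat I,\,a<b\}$. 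A quick check against Example \ref{ex:17} (where $\Delta_{2,3,6,7,9}=-b$ but $\Delta_I(A[\alpha_1])\Delta_J(A[\alpha_2])=b$) confirms the formula.

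Since $\Delta_I(A[\alpha_i]),\Delta_J(A[\alpha_j])\ge 0$ by hypothesis, the claim reduces to choosing a sign function $s\colon[M]\to\{\pm 1\}$ (implemented by flipping individual nonzero entries of $A[\alpha_i]$ or $A[\alpha_j]$) such that $\prod_{m\in K}s_m=\varepsilon(\hat I,\hat J)$ for every $K\in\mathcal{M}(A[\alpha_i]\hat\oplus A[\alpha_j])$. For the constructive part I would appeal to the standard algorithm (cf.\ \cite{K:17, KW:14}) that builds a TNN matrix in $\Gr(N,M)_{\ge 0}$ directly from the chord diagram of any derangement of $S_M$. The union of the two chord diagrams of $\pi(A[\alpha_i])$ and $\pi(A[\alpha_j])$ is itself a planar chord diagram precisely because of the non-crossing hypothesis, and it represents the product derangement $\pi(A[\alpha_i])\cdot\pi(A[\alpha_j])\in S_M$. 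Applying the chord-diagram construction to this product produces a matrix $\widetilde A\in\Gr(N,M)_{\ge 0}$ whose nonzero support and absolute values of entries coincide with those of $A[\alpha_i]\hat\oplus A[\alpha_j]$, differing only in signs; these sign differences are exactly the adjustments sought.

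The main obstacle is justifying the entry-by-entry compatibility asserted in the previous paragraph, i.e.\ that the chord-diagram algorithm applied to the combined diagram does not generate new nonzero entries, does not modify the magnitudes of old ones, and does not mix entries between the two blocks. I would handle this by induction on the total number of chords, peeling off at each stage an outermost chord (guaranteed to exist by the planarity of the non-crossing union) belonging to whichever of $\pi(A[\alpha_i])$ or $\pi(A[\alpha_j])$ it lies in; the inductive step affects only the rows and columns of that chord's block, preserving separation, and the remaining diagram again satisfies the non-crossing hypothesis. The necessity of non-crossing in this scheme is already supplied by Proposition \ref{lem:crossing}: if the two permutations crossed, the dimensional (parameter-count) obstruction would prevent any sign adjustment from yielding a TNN matrix, confirming that the non-crossing hypothesis is sharp.
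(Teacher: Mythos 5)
Your argument follows the same route as the paper's: both rest on Lemma \ref{lem:matP} (the factorization of every maximal minor of the $\kappa$-direct sum into a product $\pm\Delta_I(A[\alpha_i])\Delta_J(A[\alpha_j])$), and both fix the signs by comparing against the totally nonnegative representative of the positroid cell of the product derangement $\pi(A[\alpha_i])\cdot\pi(A[\alpha_j])$, whose existence with the matching parameter count is exactly what the non-crossing hypothesis (via the dimension formula) supplies. The paper determines each sign through the unique maximal minor that equals $\pm a_{k,l}$ rather than through your explicit Laplace sign $\varepsilon(\hat I,\hat J)$ and chord-peeling induction, but the substance is the same and your version is, if anything, more explicit at the same level of rigor as the paper's own sketch.
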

\begin{proof}
Since $A[\alpha_i]$ and $A[\alpha_j]$ are non-crossing, we have
from Lemma \ref{lem:matP} that the number of free parameters of $A[\alpha_i]\hat\oplus A[\alpha_j]$ is just the sum of those in $A[\alpha_i]$ and $A[\alpha_j]$, and the number of free parameters in the totally nonnegative matrix $A[\alpha_i,\alpha_j]$ is the same as that of $A[\alpha_i]\hat\oplus A[\alpha_j]$. These free parameters are given by the nonzero entires of the matrices $A[\alpha_i]$ and $A[\alpha_j]$. Let $a_{k,l}$ be a nonzero entry corresponding to that in $A[\alpha_i]$ or $A[\alpha_j]$, where $\{{k},{l}\}$ is either in $\hat{\mathcal{I}}[\alpha_i]$ or $\hat{\mathcal{I}}[\alpha_j]$.
Then there exists a unique pair $(I,J)\in\mathcal{M}(A[\alpha_i])\times\mathcal{M}(A[\alpha_j])$, so that $\Delta_{\hat I\cup \hat J}(A[\alpha_i]\hat\oplus A[\alpha_j])=\pm\Delta_I(A[\alpha_i])\Delta_J(A[\alpha_j])=\pm a_{k,l}$, where $\hat{I}=\phi(I)$ and $\hat{J}=\phi(J)$. Note also that $\Delta_{\hat{I}\cup\hat{J}}(A[\alpha_i,\alpha_j])=\pm a_{k,l}$.
This determines the signs of all nonzero entries $a_{{k},{l}}$ in $A[\alpha_i]\hat\oplus A[\alpha_j]$. Since $A[\alpha_i,\alpha_j]$ is unique and have the same sets of minors as $A[\alpha_i]\hat\oplus A[\alpha_j]$,
we have that $\Delta_{\hat I\cup \hat{J}}(A[\alpha_i]\hat\oplus A[\alpha_j])>0$ for all $I\in\mathcal{M}(A[\alpha_i])$ and $J\in\mathcal{M}(A[\alpha_j])$.\end{proof}

As the summary of the results in this section, we have the main theorem.
\begin{theorem}\label{thm:main}
Suppose that $A[\alpha_i]$ and $A[\alpha_j]$ are non-crossing for any $1\le i\ne j\le K$. Then we can make the combined matrix $\hat\oplus_{k=1}^KA[\alpha_k]$  totally nonnegative, and we have
\[
\pi(A[\alpha_1,\ldots,\alpha_K])=\prod_{k=1}^K\pi(A[\alpha_k]),
\]
where $A[\alpha_1,\ldots,\alpha_K]$ denotes the totally nonnegative matrix associated to $\hat\oplus_{k=1}^KA[\alpha_k]$, i.e.
$A[\alpha_1,\ldots,\alpha_K]\in\Gr(N,M)_{\ge0}$ with $N=n_1+\cdots+n_K$ and $M=m_1+\cdots+m_K$.
\end{theorem}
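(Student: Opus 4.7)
The plan is to induct on $K$, using Proposition \ref{prop:noncrossing} both as the base case and as the workhorse for the inductive step. For $K=1$ the statement is vacuous, and for $K=2$ it is exactly Proposition \ref{prop:noncrossing} together with the observation that $\pi(A[\alpha_1]\hat\oplus A[\alpha_2])=\pi(A[\alpha_1])\cdot\pi(A[\alpha_2])$, which is Corollary \ref{coro:Aij} recast in terms of permutations. So assume the result holds for any $K-1$ pairwise non-crossing matrices, and let $B:=A[\alpha_1,\ldots,\alpha_{K-1}]\in\Gr(N',M')_{\ge 0}$ be the totally nonnegative matrix produced by the inductive hypothesis, where $N'=n_1+\cdots+n_{K-1}$, $M'=m_1+\cdots+m_{K-1}$, and $\pi(B)=\prod_{k=1}^{K-1}\pi(A[\alpha_k])$.

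Next, I would verify that $B$ and $A[\alpha_K]$ are non-crossing in the sense of Definition \ref{def:NC}. This reduces to a chord-diagram observation: by construction, the chords of $\pi(B)$ are the disjoint union of the chords of $\pi(A[\alpha_k])$ for $k<K$, each of which is hypothesized to be non-crossing with $\pi(A[\alpha_K])$, so no chord of $\pi(B)$ can cross a chord of $\pi(A[\alpha_K])$. Applying Proposition \ref{prop:noncrossing} to the pair $(B,A[\alpha_K])$ then yields sign adjustments making $B\hat\oplus A[\alpha_K]=\widehat\bigoplus_{k=1}^K A[\alpha_k]$ totally nonnegative, completing the existence part of the induction.

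For the permutation identity, iterated application of Lemma \ref{lem:matP} gives
\[
\mathsf{P}\!\left(\widehat\bigoplus_{k=1}^K A[\alpha_k]\right)=\left\{\prod_{k=1}^K\Delta_{I_k}(A[\alpha_k]) : I_k\in\mathcal{M}(A[\alpha_k])\right\},
\]
so the positroid (matroid of nonzero maximal minors) of $\widehat\bigoplus_{k=1}^K A[\alpha_k]$ is the concatenation of the individual positroids. Since the derangement attached to a totally nonnegative matrix is determined by its positroid, and since the index subsets $\hat{\mathcal{I}}[\alpha_k]$ partition $[M]$ so that $\prod_{k=1}^K\pi(A[\alpha_k])$ is a well-defined derangement in $S_M$, it follows that $\pi(A[\alpha_1,\ldots,\alpha_K])=\prod_{k=1}^K\pi(A[\alpha_k])$. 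Equivalently one can invoke Corollary \ref{coro:Aij}: the dominant exponentials, and hence the asymptotic solitons, of the two matrices coincide, and these asymptotic data already reconstruct $\pi$.

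The main obstacle I expect is showing that the sign adjustments assembled inductively are mutually consistent, i.e. that attaching $A[\alpha_K]$ to $B$ does not force a re-sign of entries already fixed inside $B$. This is resolved by the uniqueness clause implicit in Proposition \ref{prop:noncrossing}: the sign of each nonzero entry $a_{k,l}$ is forced by the requirement that the corresponding single minor $\Delta_{\hat I\cup\hat J}$ of the $\kappa$-direct sum agree with $\pm\Delta_I(A[\alpha_i])\Delta_J(A[\alpha_j])$, and since positroid cells have unique totally nonnegative representatives up to the known scaling freedoms, the signs chosen at stage $K-1$ inside $B$ coincide with those demanded at stage $K$. This consistency check is what makes the induction go through cleanly.
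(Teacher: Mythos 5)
Your proposal is correct and follows essentially the same route as the paper: the paper states Theorem \ref{thm:main} as ``the summary of the results in this section,'' i.e.\ it implicitly iterates Proposition \ref{prop:noncrossing}, Lemma \ref{lem:matP} and Corollary \ref{coro:Aij}, and your induction on $K$ simply makes that iteration explicit. The only point worth flagging is that Proposition \ref{prop:noncrossing} is stated for matrices attached to single values of $\alpha$, so applying it to the pair $(B,A[\alpha_K])$ requires observing (as you essentially do) that its proof is purely combinatorial --- non-crossing chords, the dimension count \eqref{eq:dimA}, and Lemma \ref{lem:matP} --- and therefore applies verbatim to the combined matrix $B$; likewise the sign choices need only be made once for the final $K$-fold $\kappa$-direct sum, so the ``consistency'' issue you raise is not actually an obstruction to the existence claim.
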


This theorem implies that the soliton solution generated by the soliton parameters $\kappa=(\kappa_1,\ldots,\kappa_M)$ with the sorted coordinates $\kappa_m=\kappa_j[\alpha_k]$
($m=\phi(j,k)$) and $A[\alpha_1,\ldots,\alpha_K]\in\Gr(N,M)_{\ge0}$ is regular, that is, it is a KP soliton.

{\begin{remark}
In the combinatorics of totally nonnegative Grassmannian, non-crossing permutations play a fundamental role in totally nonnegativity, as explained in \cite{W:16}. It may be interesting to establish a precise connection between our results and their results.
\end{remark}}


\section{Regular solitons for the Boussinesq equation}\label{sec:Boussinesq}
Based on the previous section, we give the detailed study of the real regular soliton solutions of the Boussinesq equation.
\subsection{The Boussinesq equation from the KP theory}
It is well known that the Boussinesq equation is given by the 3-reduction of the KP hierarchy, i.e. $L^3=(L^3)_{\ge0}=B_3$. We write
\begin{equation*}\label{B3}
L^3= B_{3}:=\partial^{3}+\frac{3}{2}u\partial+\frac{3}{4}u_{x}+w,
\end{equation*}
where $u=2u_2$ and $w=3u_3+\frac{3}{2}u_{2,x}$ with the Lax operator \eqref{eq:LaxOP}.
Then the Boussinesq equation is obtained by the Lax equation,
\begin{equation*}\label{BH}
\frac{\partial L^{3}}{\partial t_{2}}=\frac{\partial B_{3}}{\partial t_{2}}=[B_{2}, B_{3}],
\end{equation*}
where $B_2=(L^2)_{\ge0}=\partial^2+u$. This gives
\begin{eqnarray}\label{eq:BOUT}
\left\{ \begin{array}{ll}\vspace{1ex}\begin{aligned}
 & \frac{\partial u}{\partial t_2}=\frac{4}{3} w_x,\\
 & \frac{\partial w}{\partial t_2}=-\frac{1}{4} u_{xxx}-\frac{3}{2}u u_{x}.
\end{aligned}\end{array} \right.
\end{eqnarray}
The functions $u$ and $w$ can be represented by the $\tau$-function,
\begin{equation}\label{eq:Btau}
u=2\frac{\partial^2}{\partial x^2}\ln \tau,\qquad \text{and}\qquad w=\frac{3}{2}\frac{\partial^2}{\partial x\partial  t_2}\ln \tau.
\end{equation}
Eliminating $w$ in \eqref{eq:BOUT}, we have
\begin{equation}\label{eq:BOU}
3u_{t_2t_2}+(3u^2+u_{xx})_{xx}=0.
\end{equation}
Notice that this is not a standard form of the Boussinesq equation. Also note that this equation does not have a soliton solution with the vanishing boundary condition, i.e. $u\to 0$ as $|x|\to\infty$. In order to obtain a regular soliton (exponential) solution,
we need to impose a non-vanishing boundary condition. Assuming $u\to u+u_0$, \eqref{eq:BOU} becomes
\begin{align}\label{eq:SB}
3{u}_{t_2t_2}+6u_0{u}_{xx}+(3{u}^{2}+{u}_{xx})_{xx}=0.
\end{align}
In terms of the $\tau$-function in \eqref{eq:Btau}, this shift of $u$ implies
\begin{equation*}\label{eq:u-shift}
\tau\quad\longrightarrow\quad e^{\frac{1}{4}u_0x^2}\tau.
\end{equation*}
One can easily check that \eqref{eq:SB} admits a soliton solution when $u_0<0$. This shifted form of the Boussinesq equation can be also obtained by the coordinate change in the KP equation,
\begin{eqnarray}\label{eq:t-shift}
\left\{ \begin{array}{ll}\vspace{1ex}\begin{aligned}
 & x~\to~x+c_{1}t_3,\\
 & t_2~\to~t_2+c_2t_3,\\
 & t_3 ~\to~ {t_3},
\end{aligned}\end{array} \right. \qquad\text{which gives}\qquad
\left\{ \begin{array}{ll}\vspace{1ex}\begin{aligned}
& \frac{\partial}{\partial x}~\to~\frac{\partial}{\partial {x}},\quad\quad\frac{\partial}{\partial t_2}~\to~\frac{\partial}{\partial {t_2}},\\
 & \frac{\partial}{\partial t_3}~\to~\frac{\partial}{\partial t_3}+c_{1}\frac{\partial}{\partial {x}}+c_{2}\frac{\partial}{\partial {t_2}},
\end{aligned}\end{array} \right.
\end{eqnarray}
where $c_{1}, c_{2}$ are arbitrary constants. After the change of coordinates (\ref{eq:t-shift}), the KP equation (\ref{eq:KP}) can be expressed by
\begin{equation}\label{eq:sKP}
-4u_{xt}-4c_1u_{xx}-4c_2u_{xt_2}+(3u^2+u_{xx})_{xx}+3u_{t_2t_2}=0.
\end{equation}
Then we consider a stationary solution in $t_3$ (the 3-reduction), that is, $u_{t_3}=0$. By choosing $c_{2}=0$ and $c_{1}=-\frac{3}{2}u_0$, the equation (\ref{eq:sKP}) becomes (\ref{eq:SB}). As was shown in the previous section, the change of coordinates now leads to the versal deformation of the spectral curve, i.e.
\begin{equation}\label{eq:Bcurve}
\Phi_3(\kappa,\alpha)=\kappa^3-c_1\kappa-\alpha=0.
\end{equation}
Note here if $c_1>0$, we have three real roots for some $\alpha\in\R$, which gives a real exponential basis for the soliton solution. Then the soliton solutions of the Boussinesq equation,
\begin{equation}\label{eq:SBoussinesq}
3u_{t_2t_2}-4c_1u_{xx}+(3u^2+u_{xx})_{xx}=0,
\end{equation}
are generated by the $\tau$-function in the following form,
\begin{equation}\label{eq:B-tau}
\tau(\t)=\text{Wr}(f_1,\ldots,f_N)(\t),
\end{equation}
where $f_j(\t)$ is given by a linear combination of the exponential functions \eqref{eq:E}. In this section,
we consider the Boussinesq equation in the following standard form,
\begin{equation}\label{eq:B-standard}
u_{t_2t_2}-c_0^2u_{xx}+\frac{1}{3}\left(3u^2+u_{xx}\right)_{xx}=0,
\end{equation}
where we have taken $c_1=3c_0^2/4$ $(\text{we assume}~c_0>0)$.

\subsection{One soliton solution of the Boussinesq equation}
A soliton solution of the Boussinesq equation \eqref{eq:B-standard} as one soliton solution of the KP equation with $\ell=3$ is given by
\begin{equation}\label{eq:1-solB}
u(x,t_2)=\frac{(\kappa_i-\kappa_j)^2}{2}\sech^2\frac{\kappa_i-\kappa_j}{2}\left(x+(\kappa_i+\kappa_j)t_2+x_0\right),
\end{equation}
where $(\kappa_i,\kappa_j)$ is a pair of roots of the curve \eqref{eq:Bcurve}. The amplitude and the velocity of the $[i,j]$-soliton is then given by
\begin{equation*}\label{eq:av}
a_{[i,j]}=\frac{1}{2}(\kappa_i-\kappa_j)^2,\qquad v_{[i,j]}=-(\kappa_i+\kappa_j).
\end{equation*}
Since the parameters $(\kappa_i,\kappa_j)$ satisfy the curve \eqref{eq:Bcurve}, the amplitude and the velocity have the relations.
Note first that the roots $\{\kappa_1,\kappa_2,\kappa_3\}$ of the spectral curve \eqref{eq:Bcurve} satisfy the symmetric polynomials,
\[
\sum_{j=1}^3\kappa_j=0,\qquad\sum_{j<k}\kappa_j\kappa_k=-c_1,\qquad\prod_{j=1}^3\kappa_j=\alpha.
\]
Then from the first two equations, we see that any pair of two roots $(\kappa_i,\kappa_j)$ of \eqref{eq:Bcurve} satisfies the elliptic curve,
\begin{equation}\label{eq:Belliptic}
\left(\kappa_i+\kappa_j\right)^2+\frac{1}{3}\left(\kappa_i-\kappa_j\right)^2=\frac{4}{3}c_1=c_0^2,\quad\text{that is,}\quad v_{[i,j]}^2+\frac{2}{3}a_{[i,j]}=c_0^2.
\end{equation}
We note here that there are two groups of solitons having opposite propagating directions. The details can be computed as follows.
The elliptic curve \eqref{eq:Belliptic} and the cubic curve \eqref{eq:Bcurve} are illustrated in Figure \ref{fig:elliptic}.
\begin{figure}[H]
  \begin{minipage}[t]{0.45\linewidth}
  \centering
  \includegraphics[height=6cm,width=7cm]{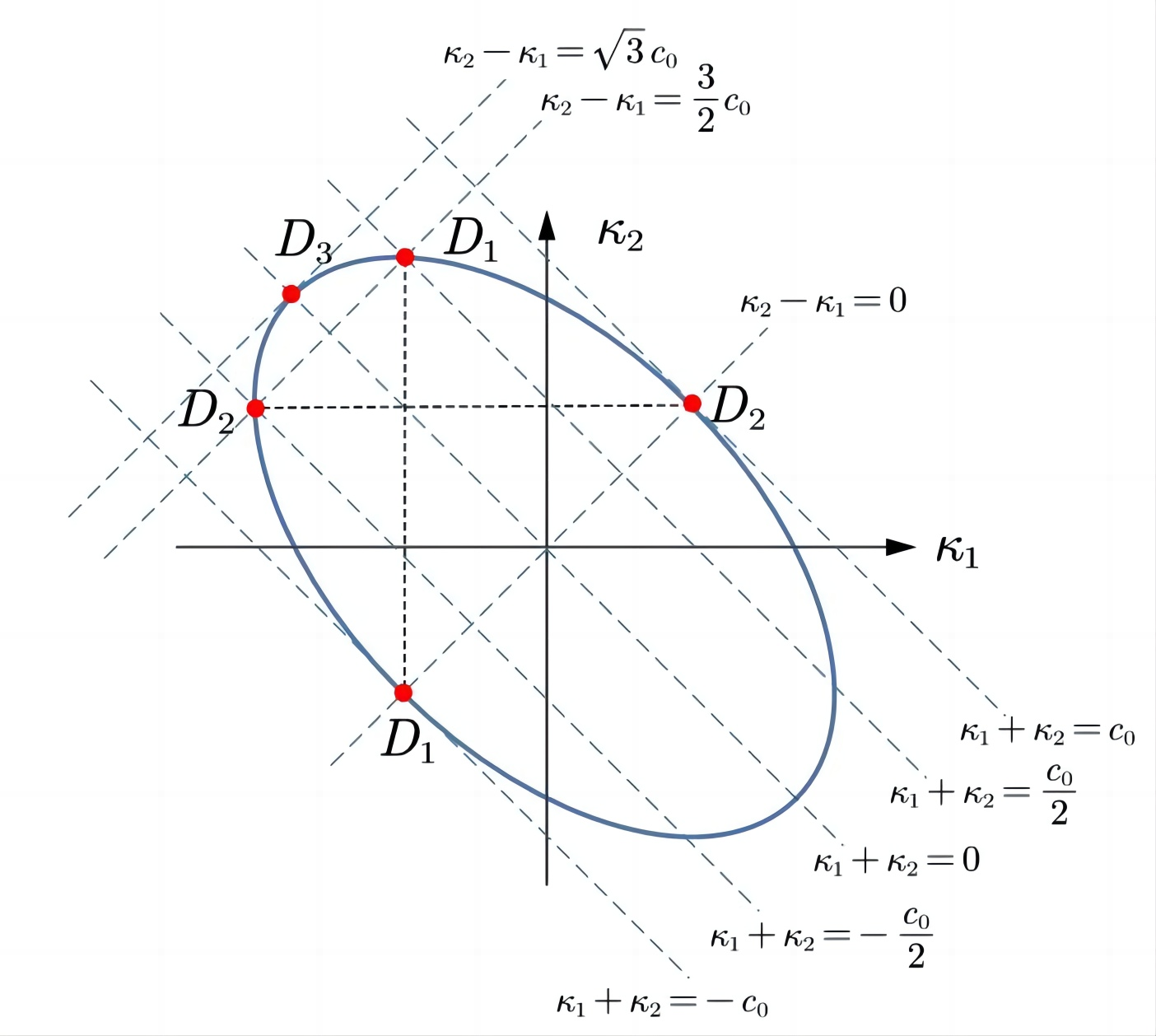}
  \end{minipage}
  \hskip0.5cm
  \begin{minipage}[t]{0.45\linewidth}
  \centering
  \raisebox{0.5cm}{\includegraphics[height=4.5cm,width=7cm]{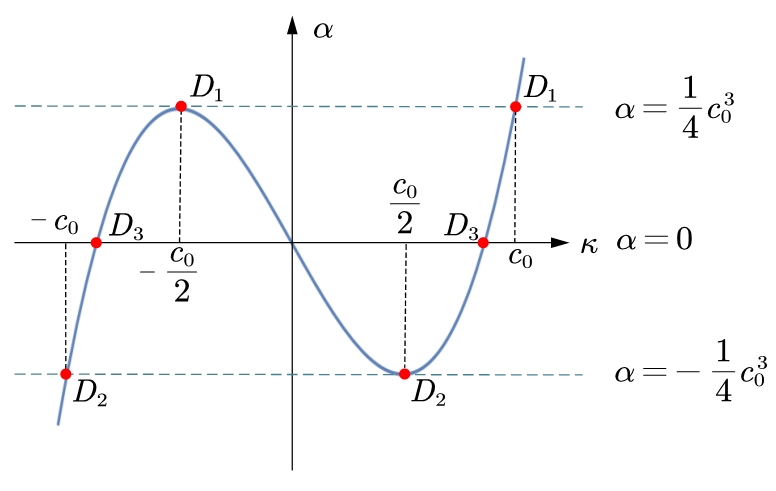}}
  \end{minipage}
  \caption{The left figure is the elliptic curve given by \eqref{eq:Belliptic}, and the right one is the spectral curve $\kappa^3-c_1\kappa=\alpha$.  \label{fig:elliptic}}
\end{figure}
\noindent
The three roots $(\kappa_1<\kappa_2<\kappa_3)$ at the points $D_1, D_2$ and $D_3$ in Figure \ref{fig:elliptic} are
\[
D_1:\quad \left(-\frac{1}{2} c_0,-\frac{1}{2} c_0, c_0\right),\qquad D_2:\quad \left(-c_0,\frac{1}{2} c_0,\frac{1}{2} c_0\right),\qquad D_3:\quad \left(-\frac{\sqrt 3}{2}c_0,0, \frac{\sqrt{3}}{2}c_0\right).
\]
Then we note that there are three types of solitons:
\begin{itemize}
\item[(a)] the right propagating solitons with
\[
\frac{1}{2}c_0<v_{[1,2]}<c_0,\qquad 0<a_{[1,2]}<\frac{9}{8}c_0^2,
\]
\item[(b)] the slow propagating solitons with
\[
-\frac{1}{2}c_0<v_{[1,3]}<\frac{1}{2}c_0,\qquad \frac{9}{8}c_0^2<a_{[1,3]}\le \frac{3}{2}c_0^2,
\]
\item[(c)]
the left propagating solitons with
\[
-c_0<v_{[2,3]}<-\frac{1}{2}c_0,\qquad 0<a_{[2,3]}<\frac{9}{8}c_0^2.
\]
\end{itemize}
Note that the amplitude $a_{[i,j]}$ of the soliton is limited by $3c_0^2/2$.
\begin{remark}
We remark that the linear wave of the Boussinesq equation satisfies the dispersion relation,
\[
\omega^2=c_0^2k^2 +\frac{1}{3}k^4,\qquad\text{which gives}\quad v=\frac{\omega}{k}=\pm c_0\sqrt{1+\frac{1}{3c_0^2}k^2}.
\]
This is derived from the linear part of \eqref{eq:B-standard} with $u=e^{ikx-i\omega t_2}$.
Note here that $|v|>c_0$, that is, the linear waves propagate faster than the solitons (soliton resolution).
\end{remark}

\subsection{Multi-soliton solutions}
We now construct a general soliton solution of the Boussinesq equation \eqref{eq:B-standard} by taking several different values of $\alpha$ in the spectral curve \eqref{eq:Bcurve}.
For a proper choice of $\alpha$, we have three real distinct roots as shown below.
\vskip-0.5cm
\begin{figure}[H]
  \begin{minipage}[t]{1\linewidth}
  \centering
  \includegraphics[height=3.5cm,width=8cm]{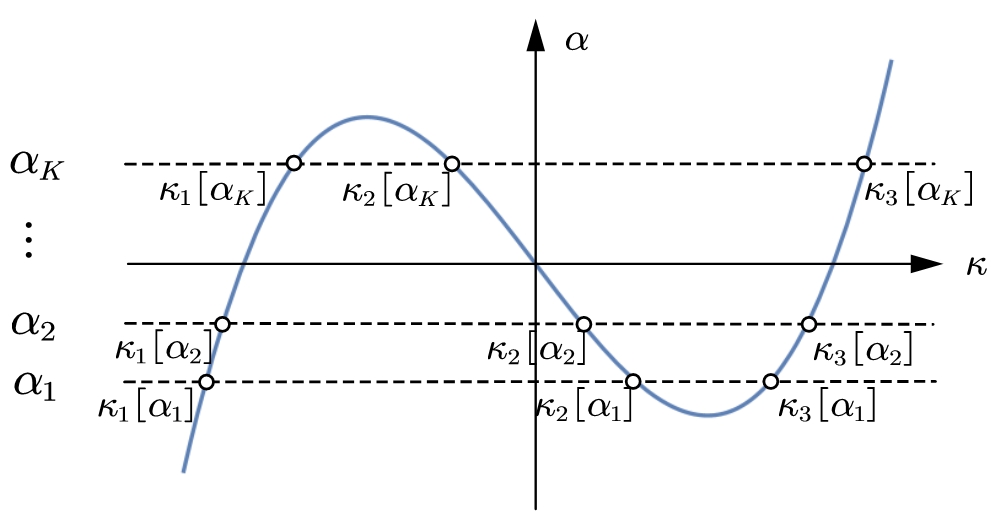}
  \caption{For each $\alpha=\alpha_j$, there are three roots labeled as $(\kappa_1[\alpha_j]<\kappa_2[\alpha_j]<\kappa_3[\alpha_j])$.}
  \end{minipage}
\end{figure}
\vskip-0.5cm
We have the following two cases for the index set $\mathcal{I}[\alpha]\subset [3]$,
$\mathcal{I}[\alpha]\subset [3]=\{1,2,3\}$ can be chosen in the following two cases:
\begin{itemize}
\item[(a)] $\mathcal{I}[\alpha]=2$, i.e. we take two roots $(\kappa_i[\alpha]<\kappa_j[\alpha])$. There are three different choices, and the soliton solution
is given by \eqref{eq:1-solB} with $A=(1,a)\in \Gr(1,2)_{\ge 0}$ (these are the solitons discuss above).
\item[(b)] $\mathcal{I}[\alpha]=3$, i.e. we take all three roots.
The soliton solution in this case has a \emph{resonant} interaction with three solitons $[1,2]$-, $[1,3]$- and $[2,3]$-types.
There are two types of resonant solution with $A[\alpha]=(1,a_1,a_2)\in\Gr(1,3)_{\ge0}$ or $A[\alpha]=\begin{pmatrix}1&0&-a_1\\0&1&a_2\end{pmatrix}\in\Gr(2,3)_{\ge0}$. These soliton solutions form a $Y$-shape resonant structure as shown in the figure below.
\end{itemize}
The permutation diagrams corresponding to these solutions are given by
\begin{figure}[H]
  \begin{minipage}[t]{0.33\linewidth}
  \centering
  \includegraphics[height=1.5cm,width=2.5cm]{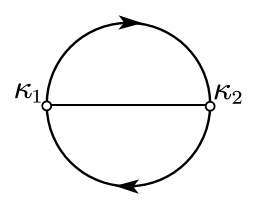}
  \end{minipage}%
 \hskip-1cm
  \begin{minipage}[t]{0.33\linewidth}
  \centering
  \includegraphics[height=1.5cm,width=3cm]{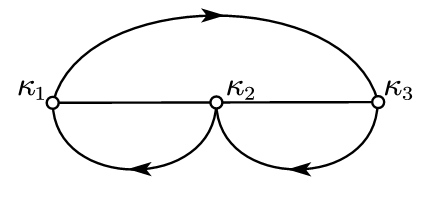}
  \end{minipage}
  \hskip-0.5cm
  \begin{minipage}[t]{0.33\linewidth}
  \centering
  \includegraphics[height=1.5cm,width=3cm]{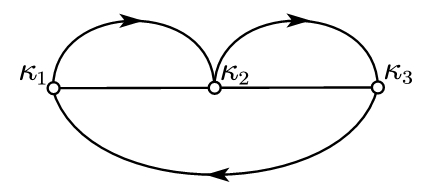}
  \end{minipage}
  \caption{The chord diagrams $\pi(A[\alpha])$ for 1-soliton and Y-solitons.}
\end{figure}

\vskip-0.2cm
A general soliton solutions of the Boussinesq equation can be constructed by a $\kappa$-direct sum of non-crossing matrices $A[\alpha_k]$ for $k=1,\ldots,K$ with some $K$.
Then from Theorem \ref{thm:main}, we have the following theorem.
\begin{theorem}\label{thm:B}
The $\tau$-function of any real regular soliton solution of the Boussinesq equation \eqref{eq:SB} can be generated by
one of the following three cases with the soliton parameters $\kappa\in\R^M$ and $A\in\text{Gr}(N,M)_{\ge 0}$.
\begin{itemize}
\item[(I)] We take the following sets of the roots for $\alpha_k$ $(1\le k\le K=1+\gamma_1+\gamma_2$) with $\{\kappa_i[\alpha_1]:i=1,2,3\}$ and
\[
\{\kappa_1[\alpha_{1+j}],\kappa_2[\alpha_{1+j}]:j=1,\ldots,\gamma_1\}\cup\{\kappa_2[\alpha_{\gamma_1+1+l}],\kappa_3[\alpha_{\gamma_1+1+l}]:l=1,\ldots,\gamma_2\}.
\]
Then, we have the sorted soliton parameter $\kappa=(\kappa_1,\ldots,\kappa_M)$ with $M=2K+1=3+2(\gamma_1+\gamma_2)$,
\begin{align*}
\kappa_1&=\kappa_1[\alpha_1]<\kappa_1[\alpha_2]<\cdots<\kappa_1[\alpha_{\gamma_1+1}]<\kappa_2[\alpha_{\gamma_1+1}]<\cdots<\kappa_2[\alpha_2]<\\
&<\kappa_2[\alpha_1]<\kappa_2[\alpha_{\gamma_1+2}]<\cdots<\kappa_2[\alpha_K]<\kappa_3[\alpha_K]<\cdots<\kappa_3[\alpha_{\gamma_1+2}]<\kappa_3[\alpha_1]=\kappa_M.
\end{align*}
\begin{figure}[H]
  \begin{minipage}[t]{1\linewidth}
  \centering
  \includegraphics[height=3.5cm,width=11cm]{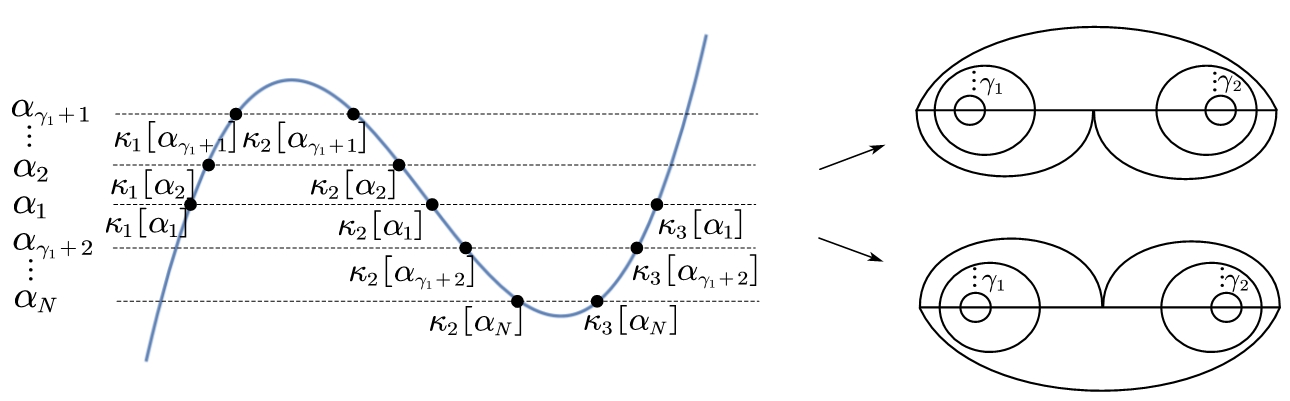}
  \caption{{The choices of roots in the case (I), and the permutations of the soliton solutions having one $Y$-soliton.}}
  \end{minipage}
\end{figure}
Then, there are two cases in the choice of $\kappa$-direct sum $A:=\hat\oplus_{k=1}^KA[\alpha_k]$, where $A[\alpha_1]\in\Gr(1,3)$ or $A[\alpha_1]\in \Gr(2,3)$ while $A[\alpha_j]\in\Gr(1,2)$ for $j=2,\ldots,K$.
\begin{itemize}
\item[(a)]
The chord diagram shown in the top right shows $A[\alpha_1]\in\text{Gr}(1,3)$, and the corresponding derangement is
\[
\pi(A)=(1,M,2\gamma_1+2)\prod_{k=1}^{\gamma_1}(k+1,2\gamma_1+2-k)\prod_{l=1}^{\gamma_2}(2\gamma_1+2+l,M-l).
\]
In this case, we have $A\in\Gr(N,M)_{\ge0}$ with $N=K$ and $M=2K+1$.
\item[(b)]
The $\pi(A)$ for the bottom right is the same except the identification of $\pi(A[\alpha_1]))$,
\[
\pi(A[\alpha_1])=(1,2\gamma_1+2,M).
\]
In this case, we have $A\in\text{Gr}(N,M)$ with $N=K+1$ and $M=2K+1$.
\end{itemize}
\item[(II)] We take just two roots of $\Phi_3(\kappa,\alpha_i)=0$ for each $i$, and take $i=1,\ldots,N$, $N=1+\gamma_1+\gamma
_2$, i.e. the soliton parameter $\kappa=(\kappa_1,\ldots,\kappa_M)$ with $M=2N=2(1+\gamma_1+\gamma_2)$.
There are two cases as shown in the figures below.
\vskip-0.5cm
\begin{figure}[H]
  \begin{minipage}[t]{1\linewidth}
  \centering
  \includegraphics[height=3cm,width=11cm]{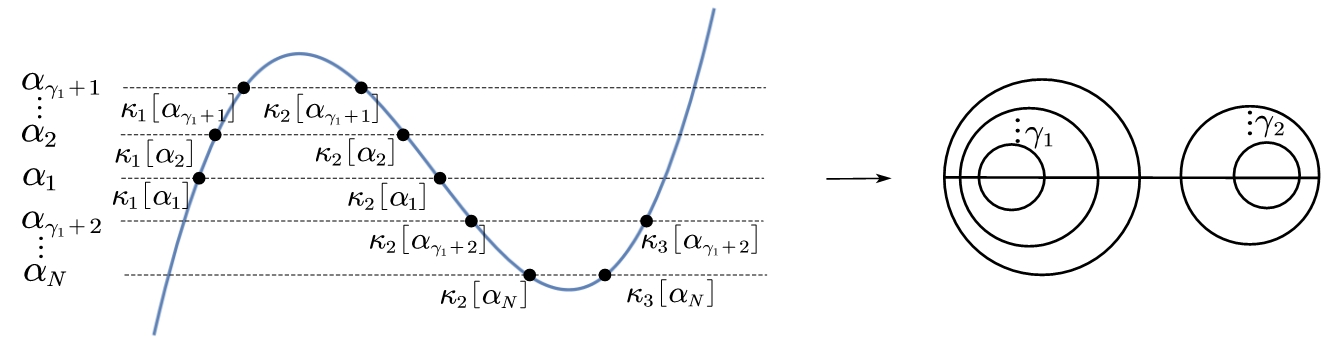}
  \caption{{The choice of the roots for the case (II) and the permutation of soliton solution consisting of two groups propagating opposite directions.}\label{fig:2G}}
  \end{minipage}
    \begin{minipage}[t]{1\linewidth}
  \centering
  \includegraphics[height=3cm,width=11cm]{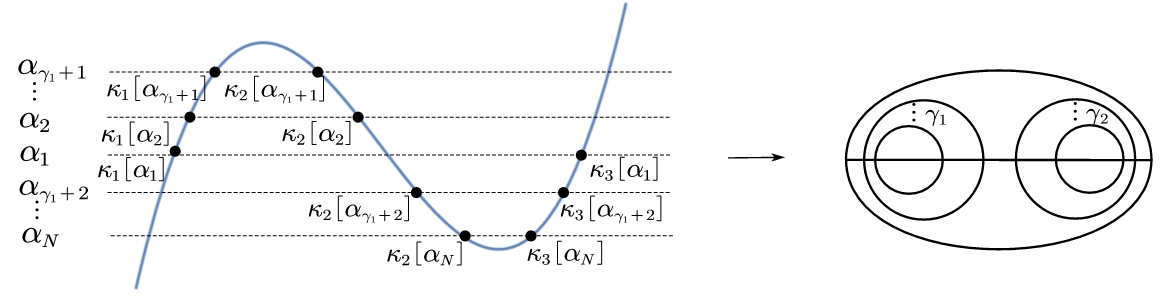}
  \caption{{The soliton solution of the type in Figure \ref{fig:2G} with a slow moving large amplitude soliton.}}
  \end{minipage}
\end{figure}
\end{itemize}
\end{theorem}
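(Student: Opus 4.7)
The plan is to invoke Theorem \ref{thm:main} and reduce the classification to an enumeration of mutually non-crossing cycle collections arising from the cubic spectral curve $\Phi_3(\kappa,\alpha)=\kappa^3-c_1\kappa-\alpha=0$. The combinatorial engine is the interleaving of roots across different values of $\alpha$: implicit differentiation of $\alpha=\kappa^3-c_1\kappa$ yields $d\kappa/d\alpha=1/(3\kappa^2-c_1)$, so in the admissible range $\kappa_1[\alpha]$ and $\kappa_3[\alpha]$ are strictly decreasing while $\kappa_2[\alpha]$ is strictly increasing, and they are confined to $\kappa_1[\alpha]<-\sqrt{c_1/3}<\kappa_2[\alpha]<\sqrt{c_1/3}<\kappa_3[\alpha]$. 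Consequently, for any $\alpha_i<\alpha_j$ the six roots interleave as
\[
\kappa_1[\alpha_j]<\kappa_1[\alpha_i]<\kappa_2[\alpha_i]<\kappa_2[\alpha_j]<\kappa_3[\alpha_j]<\kappa_3[\alpha_i],
\]
with $\kappa_1$'s and $\kappa_3$'s reversing the $\alpha$-order and $\kappa_2$'s preserving it. This interleaving pattern governs every non-crossing check.

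Next, I would classify the possible derangements from a single $\alpha_k$. Because $\ell=3$, the set $\mathcal{I}[\alpha_k]\subset\{1,2,3\}$ has cardinality $m_k\in\{2,3\}$, producing either a 2-cycle on one of the pairs $\{\kappa_1,\kappa_2\}$, $\{\kappa_2,\kappa_3\}$, $\{\kappa_1,\kappa_3\}$, or a 3-cycle on all three roots. The 3-cycle comes in two orientations corresponding to $A[\alpha_k]\in\Gr(1,3)_{\ge 0}$ or $A[\alpha_k]\in\Gr(2,3)_{\ge 0}$. Applying Definition \ref{def:NC} to the interleaving above, two $(\kappa_1,\kappa_2)$-cycles (resp.\ two $(\kappa_2,\kappa_3)$-cycles) at distinct $\alpha$'s are nested, a $(\kappa_1,\kappa_2)$-cycle and a $(\kappa_2,\kappa_3)$-cycle are separated by the central region containing the $\pm\sqrt{c_1/3}$ values and never cross, while two 3-cycles at distinct $\alpha_i<\alpha_j$ inevitably cross because any triangle built on $\{\kappa_1[\alpha_i],\kappa_2[\alpha_i],\kappa_3[\alpha_i]\}$ has a chord that interleaves with one from the triangle on $\{\kappa_1[\alpha_j],\kappa_2[\alpha_j],\kappa_3[\alpha_j]\}$ regardless of orientation.

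Combining these checks, the admissible non-crossing configurations split into two families. If no 3-cycle appears, all cycles are nested 2-cycles of the two adjacent-pair types, giving case (II); if a 3-cycle appears, it must be unique and placed at the extremal $\alpha_1$ so that its arcs contain every other 2-cycle chord in their nesting envelope, giving case (I). Translating the sorted labels $m=\phi(j,k)$ into explicit cycle notation, the 3-cycle occupies the extremal left, one middle, and extremal right positions $1$, $2\gamma_1+2$, $M$, yielding $(1,M,2\gamma_1+2)$ or $(1,2\gamma_1+2,M)$ according to the two orientations from $\Gr(1,3)_{\ge 0}$ and $\Gr(2,3)_{\ge 0}$, respectively. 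The $(\kappa_1,\kappa_2)$-cycles attached to $\alpha_{1+j}$ then fill positions $(k+1,2\gamma_1+2-k)$ nested between positions $2$ and $2\gamma_1+1$, and the $(\kappa_2,\kappa_3)$-cycles attached to $\alpha_{\gamma_1+1+l}$ fill $(2\gamma_1+2+l,M-l)$ nested between $2\gamma_1+3$ and $M-1$, exactly as stated. By Proposition \ref{prop:noncrossing} the $\kappa$-direct sum can be made totally nonnegative in each case, so Theorem \ref{thm:main} delivers a regular soliton solution of \eqref{eq:SB}.

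The main obstacle is the crossing argument for two 3-cycles: one must verify an unavoidable crossing for every pair of orientations, using the interleaving to show that at least one pair of upper arcs, or at least one pair of lower arcs, connects interleaved endpoints. A companion subtlety is the case of $(\kappa_1,\kappa_3)$-type 2-cycles, where the interleaving shows that such a cycle at $\alpha_k$ can only be nested between the arcs of the single 3-cycle at $\alpha_1$, forcing it to be absorbed into one of the two subcases of case (I) or to appear only as an isolated 1-soliton. Once these combinatorial points are nailed down, the translation into the permutations $\pi(A)=\prod_k\pi(A[\alpha_k])$ and the regularity from Proposition \ref{prop:noncrossing} are routine.
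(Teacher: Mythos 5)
Your overall strategy --- reduce to Theorem \ref{thm:main}, work out the interleaving of the roots of $\kappa^3-c_1\kappa=\alpha$ as $\alpha$ varies, and then enumerate which single-$\alpha$ building blocks (adjacent 2-cycles, the $(\kappa_1,\kappa_3)$ 2-cycle, the two orientations of the 3-cycle) can coexist without crossing --- is exactly the route the paper takes (the paper itself gives no further argument beyond ``from Theorem \ref{thm:main} we have the following''). However, two of your concrete combinatorial claims are wrong, and one of them breaks the enumeration.

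First, the monotonicity is reversed: from $d\kappa/d\alpha=1/\varphi_3'(\kappa)=1/(3\kappa^2-c_1)$ one gets $\varphi_3'>0$ at the two outer roots and $\varphi_3'<0$ at the middle root, so $\kappa_1[\alpha]$ and $\kappa_3[\alpha]$ are \emph{increasing} in $\alpha$ while $\kappa_2[\alpha]$ is \emph{decreasing}. By itself this only swaps which $\alpha$ gives the inner chord. The fatal step is the assertion that a $(\kappa_1,\kappa_2)$-cycle and a $(\kappa_2,\kappa_3)$-cycle ``never cross.'' They do cross for one of the two relative orderings of the underlying $\alpha$'s: the $(1,2)$-chord at $\alpha_i$ has endpoints $\kappa_1[\alpha_i]<\kappa_2[\alpha_i]$ and the $(2,3)$-chord at $\alpha_j$ has endpoints $\kappa_2[\alpha_j]<\kappa_3[\alpha_j]$ with $\kappa_3[\alpha_j]>\sqrt{c_1/3}>\kappa_2[\alpha_i]$ and $\kappa_2[\alpha_j]>-\sqrt{c_1/3}>\kappa_1[\alpha_i]$; hence the endpoint pairs interleave exactly when $\kappa_2[\alpha_j]<\kappa_2[\alpha_i]$, i.e.\ when $\alpha_j>\alpha_i$, and then the two upper chords (and the two lower chords) cross, so by Proposition \ref{lem:crossing} the $\kappa$-direct sum is not totally nonnegative. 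This is precisely why the theorem's sorted order forces every $\kappa_2[\alpha_{1+j}]$ of the $(1,2)$-group to precede every $\kappa_2[\alpha_{\gamma_1+1+l}]$ of the $(2,3)$-group, i.e.\ the $(2,3)$-type $\alpha$'s must all lie below the $(1,2)$-type $\alpha$'s (with $\alpha_1$ of the 3-cycle wedged in between). Your enumeration, taken at face value, would declare an arbitrary mixture of the two families admissible and thus produce singular ``solutions.''

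A smaller but related slip: a $(\kappa_1,\kappa_3)$-type 2-cycle does not ``nest between the arcs of the 3-cycle'' --- one checks from the interleaving that it always crosses a 3-cycle at a different $\alpha$ (and crosses any other $(\kappa_1,\kappa_3)$-cycle), so it is mutually exclusive with the $Y$-soliton and plays the same intermediate role; this is the second configuration of case (II), not a sub-configuration of case (I). Finally, for the ``any real regular solution'' direction you should say explicitly that you are using Proposition \ref{lem:crossing} (crossing $\Rightarrow$ not totally nonnegative) to exclude everything outside cases (I) and (II); Theorem \ref{thm:main} alone only gives sufficiency of the non-crossing configurations.
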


\subsection{Discussions}
In this section, we studied the (good) Boussinesq equation and classified the general solutions of the equation.
We hope that our results provide a model of bidirectional soliton gas as discussed recently in \cite{BD:24}.
We also found that there exists at most one resonant solution (Y-shape soliton) or one slow propagating soliton with large amplitude for the regular solitons
(see \cite{RS:17}).
It might be interesting to discuss the effect of those special solitons among two groups of counter propagating solitons.
However, it may not be so physical if the (good) Boussinesq equation is used for a shallow water wave model.
This is because that a \emph{larger} soliton has a \emph{slower} velocity, and even that the largest soliton has the zero velocity. {It was shown in \cite{Z:02} that these slow solitons are unstable under a small perturbation (also see \cite{BS:88})}.

Although we did not discuss the details of the higher reductions in this paper, one can have $\ell-1$ different groups having the distinct velocities for the $\ell$-reduction.
Their interaction properties such as the phase shifts are different when they interact with other solitons from different groups.
It is also interesting to note that one can give additional characters, like amplitudes and velocities, by a spectral curve with different deformation parameters.
We will report the details on physical applications of the $\ell$-reductions of the KP equation in a future communication.

\section{Vertex operator construction of the KP solitons under the $\ell$-reductions}\label{sec:VO}
It is well known that applying the vertex operator \cite{DKJM:83, Na:23}, one can construct the soliton solutions of the KP hierarchy.
However, as far as we know, the regularity of those solutions have not been discussed.
In this section, we determine the conditions to the vertex operators, so that these operators generate the regular soliton solutions under the $\ell$-reductions.

The vertex operator is defined by the following form with arbitrary parameters $\{p_i,q_j: 1\le i\le P,~1\le j\le Q\}$ for some positive integers $P$ and $Q$,
\begin{equation*}\label{eq:Xvertex}
X(p_i, q_j)=e^{\varphi(\t,p_i,q_j)}e^{-\varphi(\tilde\partial,p_i^{-1},q_j^{-1})},\quad\text{where}\quad \varphi(\t,p_i,q_j):=\sum_{n=1}^\infty(q_j^n-p_i^n)t_n,
\end{equation*}
{where $\tilde{\partial}=(\partial_{x},\frac{1}{2}\partial_{t_{2}},\cdots)$}. The following lemma is well-known and easy to show.
\begin{lemma}\label{lem:X}
The vertex operator $X(p_{i},q_j)$ satisfies
\begin{equation*}
X(p_{1},q_{1})X(p_{2},q_{2})=\frac{(p_{1}-p_{2})(q_{1}-q_{2})}{(p_{1}-q_{2})(q_{1}-p_{2})}: X(p_{1},q_{1})X(p_{2},q_{2}):,
\end{equation*}
where the normal ordering symbol $:\cdot:$ implies to move differential operators $\partial_{n}$ to the right. It then follows the following properties
\begin{align*}
&X(p_{1}, q_{1})X(p_{2}, q_{2})=X(p_{2}, q_{2})X(p_{1}, q_{1}), \qquad\text{if}\quad (p_1-q_2)(q_1-p_2)\ne 0,\\
&X(p_{1}, q_{1})X(p_{2}, q_{2})=0,\quad \quad\quad\text{if $p_{1}=p_{2}$ or $q_{1}=q_{2}$}.
\end{align*}
\end{lemma}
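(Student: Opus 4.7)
The plan is to reduce everything to a single scalar commutator and then invoke the Baker--Campbell--Hausdorff identity. The core observation is that $\varphi(\mathbf{t},p,q)=\sum_{n\ge 1}(q^n-p^n)t_n$ is linear in the coordinates $\{t_n\}$ while $\varphi(\tilde\partial,p^{-1},q^{-1})=\sum_{n\ge 1}\frac{1}{n}(q^{-n}-p^{-n})\partial_n$ is linear in $\{\partial_n\}$; hence the commutator of the two is a scalar, and the BCH identity $e^Ae^B = e^{[A,B]}e^Be^A$ (valid when $[A,B]$ commutes with both $A$ and $B$) applies.

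First I would compute
\[
[-\varphi(\tilde\partial,p_1^{-1},q_1^{-1}),\varphi(\mathbf{t},p_2,q_2)]
= -\sum_{n\ge 1}\frac{(q_1^{-n}-p_1^{-n})(q_2^n-p_2^n)}{n}.
\]
Expanding the product in the numerator yields four geometric-type series of the form $\sum_{n\ge 1}\frac{1}{n}(a/b)^n = -\log(1-a/b)$, which I would resum (as formal series, or on the domain $|p_2|,|q_2|<|p_1|,|q_1|$) to obtain
\[
\log\frac{(1-p_2/p_1)(1-q_2/q_1)}{(1-q_2/p_1)(1-p_2/q_1)}
= \log\frac{(p_1-p_2)(q_1-q_2)}{(p_1-q_2)(q_1-p_2)}.
\]
Exponentiating via BCH gives the contraction formula
\[
e^{-\varphi(\tilde\partial,p_1^{-1},q_1^{-1})}\,e^{\varphi(\mathbf{t},p_2,q_2)}
= \frac{(p_1-p_2)(q_1-q_2)}{(p_1-q_2)(q_1-p_2)}\,
e^{\varphi(\mathbf{t},p_2,q_2)}\,e^{-\varphi(\tilde\partial,p_1^{-1},q_1^{-1})}.
\]
Substituting this into $X(p_1,q_1)X(p_2,q_2)$ and collecting the two multiplication factors on the left and the two differential factors on the right immediately yields the normal-ordered formula with the claimed prefactor.

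For the two corollaries: commutativity when $(p_1-q_2)(q_1-p_2)\ne 0$ follows because the normal-ordered product $:\!X(p_1,q_1)X(p_2,q_2)\!:$ is manifestly symmetric in its two arguments (the exponent is a sum), and a direct sign count shows that the rational prefactor is also symmetric under $(p_1,q_1)\leftrightarrow(p_2,q_2)$. The vanishing when $p_1=p_2$ or $q_1=q_2$ follows because the numerator factor $(p_1-p_2)(q_1-q_2)$ then vanishes, while the normal-ordered product stays finite: the exponent of the multiplication piece is $\sum_n(q_1^n+q_2^n-p_1^n-p_2^n)t_n$, which carries no singularity even when two of the parameters coincide.

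The main technical point is the convergence/resummation of $\sum \frac{1}{n}(q_1^{-n}-p_1^{-n})(q_2^n-p_2^n)$ into logarithms; this is harmless at the level of formal power series in $\{t_n\}$ but, if one wants analytic statements, one has to restrict to a regime where $|p_2|,|q_2|<|p_1|,|q_1|$ and then extend by the uniqueness of the resulting rational prefactor. Once the contraction identity is in place, the three bullets of the lemma drop out algebraically.
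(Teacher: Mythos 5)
Your argument is correct and is precisely the standard contraction computation (scalar commutator plus Baker--Campbell--Hausdorff, followed by resummation of $\sum_n\frac{1}{n}x^n$ into a logarithm) that the paper implicitly invokes when it declares the lemma ``well-known and easy to show'' without writing out a proof. The two corollaries are handled correctly as well: the symmetry of both the rational prefactor and the normal-ordered exponent gives commutativity, and the vanishing of the numerator $(p_1-p_2)(q_1-q_2)$ against a finite normal-ordered factor gives the annihilation property, with your remark about working formally (or in the regime $|p_2|,|q_2|<|p_1|,|q_1|$ and then reading off the rational function) being the appropriate way to justify the resummation.
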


\subsection{The vertex operators on $\Gr(n,m)_{\ge0}$}
In the present paper, the parameters $\{p_i,q_j\}$ are determined by the roots of the spectral curve $\Phi_\ell(\kappa,\alpha)=0$ for each $\alpha$, i.e. $p_i=\kappa_{i'}[\alpha]$ and $q_j=\kappa_{j'}[\alpha]$ for some $i', j'\in \mathcal{I}[\alpha]$. As in the previous section, we take $2\le |\mathcal{I}[\alpha]|=m\le \ell$, and consider $1\le n\le m-1$.
We take an (irreducible) element $A[\alpha]\in\Gr(n,m)_{\ge 0}$. Let $\mathcal{I}_P[\alpha]$ and $\mathcal{I}_Q[\alpha]$ be
the sets of pivots and non-pivots of the matrix $A[\alpha]$, i.e. $\mathcal{I}[\alpha]=\mathcal{I}_P[\alpha]\sqcup\mathcal{I}_Q[\alpha]$.
We use the following labels for the nonzero elements in $A[\alpha]$,
\begin{align*}
\mathcal{I}_P[\alpha]&=\{i_1<i_2<\cdots< i_n\},\\
\mathcal{I}_Q[\alpha]&=\bigcup_{k=1}^n\{j_1^{(k)}<j_2^{(k)}<\cdots<j_{q_k}^{(k)}\},
\end{align*}
where $j_l^{(k)}$ is the column index of the nonzero element in the $k$-th row and {$j_{l}^{(k)}$-th column}. That is, the nonzero elements except the pivots in the matrix $A[\alpha]$ are
\begin{equation*}\label{eq:Aij}
\left\{\,a_{k,j^{(k)}_l}: \,1\le l\le q_k,~1\le k\le n\,\right\}.
\end{equation*}
Following \cite{K:24}, we give the following order for the indices $j_l^{(k)}$,
\begin{equation}\label{eq:order}
j^{(k)}_l\quad\longleftrightarrow\quad \tilde g_{k-1}+l,\qquad (1\le l\le q_k,~\, 1\le k\le n),
\end{equation}
where $\tilde g_k:=q_1+\cdots+q_k$ with $\tilde g_0=0$ and $\tilde g_n:=\tilde g$. That is, the total number of the nonzero elements in $A[\alpha]$ is $\tilde g$.

Let us define a \emph{positive} $n\times (m-n)$ matrix  $B[\alpha]=(b_{k,j^{(k)}_l})$ as
\begin{equation*}\label{eq:Bij}
b_{k,j^{(k)}_l}=a_{k,j^{(k)}_l}\prod_{p\ne k}\frac{\kappa_{i_p}-\kappa_{j^{(k)}_l}}{\kappa_{i_p}-\kappa_{i_k}}>0.
\end{equation*}

Then we have the following lemma.
\begin{lemma}\label{eq:sign}
The signs of nonzero entries $a_{k,j^{(k)}_l}$ in $A[\alpha]\in \Gr(n,m)_{\ge 0}$ is determined by the positivity conditions $b_{k,j^{(k)}_l}>0$, i.e.
\[
\text{sgn}(a_{k,j^{(k)}_l})=\text{sgn}\left(\prod_{p\ne k}\frac{\kappa_{i_p}-\kappa_{j^{(k)}_l}}{\kappa_{i_p}-\kappa_{i_k}}\right).
\]
\end{lemma}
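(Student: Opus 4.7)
The plan is to identify $b_{k,j^{(k)}_l}$ with an explicit positive coefficient in the $\tau$-function expansion \eqref{eq:tau-alpha} and then invoke the total nonnegativity of $A[\alpha]$. Equivalently, the claim ``$b_{k,j^{(k)}_l}>0$'' is enough to give the sign formula, because rearranging the defining equation yields $\mathrm{sgn}(a_{k,j^{(k)}_l}) = \mathrm{sgn}\!\bigl(\prod_{p\ne k}\frac{\kappa_{i_p}-\kappa_{j^{(k)}_l}}{\kappa_{i_p}-\kappa_{i_k}}\bigr)$ once the coefficient $b_{k,j^{(k)}_l}$ is known to be positive.

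The first step is to express $a_{k,j^{(k)}_l}$ as a signed maximal minor of $A[\alpha]$. Since $A[\alpha]$ is in RREF with pivot set $I_P=\{i_1<\cdots<i_n\}$, I would take $I' = (I_P\setminus\{i_k\})\cup\{j^{(k)}_l\}$ (recall $j^{(k)}_l>i_k$, so this set is well defined) and do a single cofactor expansion along the unique non-pivot column. Because the other columns of the submatrix are standard basis vectors $e_p$ $(p\ne k)$, only the entry in row $k$ survives, giving
\[
\Delta_{I'}(A[\alpha]) = (-1)^{s}\,a_{k,j^{(k)}_l},\qquad s := |\{p>k:\, i_p<j^{(k)}_l\}|.
\]

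The second step is the matching identity on the exponential side. Using $E_I = V(\kappa_{i_1},\ldots,\kappa_{i_n})\prod_p E_{i_p}$ and the fact that replacing $\kappa_{i_k}$ by $\kappa_{j^{(k)}_l}$ in a Vandermonde rescales it by $\prod_{p\ne k}\frac{\kappa_{j^{(k)}_l}-\kappa_{i_p}}{\kappa_{i_k}-\kappa_{i_p}}=\prod_{p\ne k}\frac{\kappa_{i_p}-\kappa_{j^{(k)}_l}}{\kappa_{i_p}-\kappa_{i_k}}$, while re-sorting the index $j^{(k)}_l$ past the $s$ pivots $i_{k+1},\ldots,i_{k+s}$ supplies an extra sign $(-1)^s$, I would derive
\[
E_{I'}(\hat\t,\alpha) = (-1)^{s}\, V(\kappa_{i_1},\ldots,\kappa_{i_n})\prod_{p\ne k}\tfrac{\kappa_{i_p}-\kappa_{j^{(k)}_l}}{\kappa_{i_p}-\kappa_{i_k}}\cdot E_{j^{(k)}_l}\prod_{p\ne k}E_{i_p}.
\]
Multiplying the two identities, the two $(-1)^s$ factors cancel and one obtains
\[
\Delta_{I'}(A[\alpha])\,E_{I'}(\hat\t,\alpha) = b_{k,j^{(k)}_l}\,V(\kappa_{i_1},\ldots,\kappa_{i_n})\, E_{j^{(k)}_l}\prod_{p\ne k}E_{i_p}.
\]

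The conclusion is now immediate: $A[\alpha]\in\Gr(n,m)_{\ge 0}$ gives $\Delta_{I'}(A[\alpha])\ge 0$, and the irreducibility assumption together with $a_{k,j^{(k)}_l}\ne 0$ upgrades this to $\Delta_{I'}(A[\alpha])>0$; meanwhile $V(\kappa_{i_1},\ldots,\kappa_{i_n})>0$ and the exponentials are strictly positive because $\kappa_1<\cdots<\kappa_m$. Hence $b_{k,j^{(k)}_l}>0$ and the sign formula follows. The only real bookkeeping to watch is the double appearance of the sorting sign $(-1)^s$ (once from reordering the column index into $I'$, once from reordering the value $\kappa_{j^{(k)}_l}$ inside the Vandermonde); the cancellation of these two signs is what produces a manifestly positive coefficient $b_{k,j^{(k)}_l}$. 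I would include one small sanity check, say $n=2$ with $I_P=\{1,3\}$ and $j=4$, to confirm the sign conventions before presenting the general argument.
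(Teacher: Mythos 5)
Your proposal is correct and follows essentially the same route as the paper: you identify $a_{k,j^{(k)}_l}$ up to the sign $(-1)^s$ with the maximal minor on the column set $(I_P\setminus\{i_k\})\cup\{j^{(k)}_l\}$ (the paper's unique $J$, with your $s$ equal to its $\mu-\nu$) and invoke total nonnegativity. The only cosmetic difference is that the paper computes the sign of $\prod_{p\ne k}(\kappa_{i_p}-\kappa_{j^{(k)}_l})/(\kappa_{i_p}-\kappa_{i_k})$ by directly counting negative factors as $(-1)^{\mu+\nu}$, whereas you extract the same sign from the Vandermonde rescaling in $E_{I'}$; both are valid.
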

\begin{proof}
For the pair $(i_k,j^{(k)}_l)$ of the pivot and the non-pivot indices,  let $(\mu, \nu)$ be the numbers given by
\[
\mu:=\left|\{p: i_p>i_k\}\right|,\qquad \nu:=\left|\{p:i_p>j^{(k)}_l\}\right|.
\]
Then we have
\[
\text{sgn}\left(\prod_{p\ne k}\frac{\kappa_{i_p}-\kappa_{j^{(k)}_l}}{\kappa_{i_p}-\kappa_{i_k}}\right)=(-1)^{\mu+\nu}.
\]
That is, we claim that $b_{k,j^{(k)}_l}>0$ implies $\text{sgn}(a_{k,j^{(k)}_l})=(-1)^{\mu+\nu}$.
This is shown by noting that there is a unique index set $J\in\mathcal{M}(A[\alpha])$ such that
\begin{equation*}
\Delta_{J}(A[{\alpha}])=\left|
\begin{array}{cccc}
I_{n-1-\mu} &  &  & \\
 &   & a_{k,j^{(k)}_l}&\\
 & I_{\mu-\nu} &  & \\
 &  &  & I_{\nu}\\
\end{array}
\right|=(-1)^{\mu-\nu}a_{k,j^{(k)}_l},
\end{equation*}
where $I_{r}$ denotes the identity matrix of $r\times r$. Since $A[\alpha]\in \Gr(n,m)_{\ge 0}$, $\Delta_J(A[\alpha])>0$. This proves the lemma.
\end{proof}

With the positive matrix $B[\alpha]$,
we define our vertex operator in the following form similar to that given in \cite{Na:23},
\begin{equation}\label{eq:VO}
V[\alpha]:=\exp\left(\sum_{k=1}^n\sum_{l=1}^{q_k}b_{k,j^{(k)}_l}X(\kappa_{i_k},\kappa_{j^{(k)}_l})\right)=\prod_{k=1}^{n}\prod_{l=1}^{q_{k}}\left(1+b_{k,j^{(k)}_l}X(\kappa_{i_k},\kappa_{j^{(k)}_l})\right).
\end{equation}
{Note here that the parameters $(\kappa_{i},\kappa_{j})$ in $X(\kappa_{i},\kappa_{j})$ are from $(i\in\mathcal{I}_P[\alpha], j\in\mathcal{I}_Q[\alpha])$, so that these $X(\kappa_{i},\kappa_{j})$ satisfy the condition in Lemma \ref{lem:X}.
Then we have the following proposition.}
\begin{proposition}\label{prop:Vsoliton}
The $\tau$-function in \eqref{eq:tau-alpha} with the soliton parameters
$(\kappa[\alpha]\in\R^m, A[\alpha]\in\Gr(n,m)_{\ge0})$ can be expressed by
\begin{align*}\label{eq:tauV}
\frac{\tau(\t,\alpha)}{E_{I_0}(\t,\alpha)}=V[\alpha]\cdot 1,
\end{align*}
where $I_0$ is the pivot index set of $A[\alpha]$.
\end{proposition}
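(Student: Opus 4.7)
The plan is to expand $V[\alpha]\cdot 1$ as an explicit sum of exponentials and match it, term by term, with the Binet-Cauchy expansion of the $\tau$-function in \eqref{eq:tau-alpha}, using the defining relation between the positive matrix $B[\alpha]$ and the Grassmannian matrix $A[\alpha]$.

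First, Lemma \ref{lem:X} implies $X(\kappa_{i_k},\kappa_{j^{(k)}_l})^2=0$, which is why the exponential $V[\alpha]$ reduces to the product on the right of \eqref{eq:VO}. Expanding that product and acting on $1$ gives
\[
V[\alpha]\cdot 1=\sum_{S}\Bigl(\prod_{(k,l)\in S}b_{k,j^{(k)}_l}\Bigr)\Bigl(\prod_{(k,l)\in S}X(\kappa_{i_k},\kappa_{j^{(k)}_l})\Bigr)\cdot 1,
\]
where the second vanishing criterion in Lemma \ref{lem:X} restricts the surviving subsets $S=\{(k_a,l_a):a=1,\ldots,s\}$ to those having distinct pivots $i_{k_a}$ and distinct non-pivots $j_a:=j^{(k_a)}_{l_a}$; each such $S$ corresponds bijectively to an $n$-subset $I_S=(I_0\setminus\{i_{k_a}\})\cup\{j_a\}$ of $[m]$. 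Iterating Lemma \ref{lem:X} to reorder the $X$-product into normal form and then acting on $1$ produces
\[
\Bigl(\prod_{(k,l)\in S}X(\kappa_{i_k},\kappa_{j^{(k)}_l})\Bigr)\cdot 1=\prod_{a<b}\frac{(\kappa_{i_{k_a}}-\kappa_{i_{k_b}})(\kappa_{j_a}-\kappa_{j_b})}{(\kappa_{i_{k_a}}-\kappa_{j_b})(\kappa_{j_a}-\kappa_{i_{k_b}})}\prod_{a=1}^{s}\frac{E_{j_a}(\t,\alpha)}{E_{i_{k_a}}(\t,\alpha)}.
\]
Since the exponential part of $E_{I_S}(\t,\alpha)/E_{I_0}(\t,\alpha)$ is exactly $\prod_a E_{j_a}/E_{i_{k_a}}$ while its Vandermonde part is $\prod_{p>q,\,p,q\in I_S}(\kappa_p-\kappa_q)/\prod_{p>q,\,p,q\in I_0}(\kappa_p-\kappa_q)$, the proposition reduces to the rational identity
\[
\prod_{(k,l)\in S}b_{k,j^{(k)}_l}\prod_{a<b}\frac{(\kappa_{i_{k_a}}-\kappa_{i_{k_b}})(\kappa_{j_a}-\kappa_{j_b})}{(\kappa_{i_{k_a}}-\kappa_{j_b})(\kappa_{j_a}-\kappa_{i_{k_b}})}=\Delta_{I_S}(A[\alpha])\cdot\frac{\prod_{p>q,\,p,q\in I_S}(\kappa_p-\kappa_q)}{\prod_{p>q,\,p,q\in I_0}(\kappa_p-\kappa_q)}
\]
for each surviving $S$.

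This last identity would be verified by substituting the defining relation $b_{k,j}=a_{k,j}\prod_{p\ne k}(\kappa_{i_p}-\kappa_j)/(\kappa_{i_p}-\kappa_{i_k})$ and performing a Laplace expansion of $\Delta_{I_S}(A[\alpha])$ along the $s$ swapped columns, reducing it to a signed $s\times s$ determinant in the entries $a_{k_a,j_b}$; a Cauchy-type determinant manipulation then absorbs the cross-term product into the Vandermonde ratio. The main obstacle is the sign bookkeeping in the Laplace expansion, which is exactly the content of Lemma \ref{eq:sign}. Alternatively one can proceed by induction on $s=|S|$: the base case $s=1$ follows directly from Lemma \ref{eq:sign}, since then $\Delta_{I_S}(A[\alpha])=\pm a_{k_1,j_1}$ multiplied by the appropriate Vandermonde factor, while the inductive step consists of peeling off one vertex operator factor from $V[\alpha]$ and applying the two-vertex formula of Lemma \ref{lem:X} to absorb the newly generated cross terms into the Vandermonde ratio for $I_S$.
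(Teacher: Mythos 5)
Your overall strategy---expand $V[\alpha]\cdot 1$ via Lemma \ref{lem:X} and match it against the Binet--Cauchy expansion \eqref{eq:tau-alpha}---is a legitimate route, but it is not the paper's route, and as written it contains a genuine gap. The paper does not re-derive the expansion of $\tau/E_{I_0}$ from scratch: it quotes Theorem 3.10 of \cite{K:24}, which says $\tau(\t)/E_{I_0}(\t)=\tilde\vartheta_{\tilde g}(z;\tilde\Omega)$ for an explicit $M$-theta function, and then the only thing left to check is that $\tilde\vartheta_{\tilde g}(z;\tilde\Omega)=V[\alpha]\cdot 1$, which follows term by term from Lemma \ref{lem:X}. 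You are in effect trying to re-prove the content of that cited theorem, and that is exactly where your argument breaks.

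The concrete problem is your claim that each surviving subset $S$ ``corresponds bijectively'' to an $n$-subset $I_S=(I_0\setminus\{i_{k_a}\})\cup\{j_a\}$, and the ensuing reduction to a rational identity \emph{for each $S$ separately}, with $\Delta_{I_S}(A[\alpha])$ on the right. The map $S\mapsto I_S$ is many-to-one in general: if rows $k$ and $k'$ both have nonzero entries in non-pivot columns $j$ and $j'$, then $S=\{(k,j),(k',j')\}$ and $S'=\{(k,j'),(k',j)\}$ produce the same index set and hence the same exponential $E_{I_S}/E_{I_0}$. (This already happens for the matrices appearing in the paper, e.g.\ rows $1$ and $2$ of the matrix in Example \ref{ex:48} share the non-pivot columns $5,7,8$.) Correspondingly, $\Delta_{I_S}(A[\alpha])$ is in general a \emph{sum} of signed monomials in the $a_{k,j}$, not a single monomial, so your per-$S$ identity cannot hold; the correct statement equates $\Delta_{I}(A[\alpha])$ times the Vandermonde ratio with the \emph{sum over all $S$ with $I_S=I$} of the monomial-times-cross-factor contributions. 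Your induction on $s=|S|$ does not survive this correction either: the base case $s=1$ is indeed Lemma \ref{eq:sign}, but the inductive step must now track how the cross factors from Lemma \ref{lem:X} reassemble the full determinant expansion of $\Delta_I$, which is precisely the nontrivial Cauchy-determinant identity established in \cite{K:24}. To repair the proof you should either (i) carry out that identity in full, summing over all matchings $S$ with a fixed $I_S$, or (ii) do as the paper does and invoke Theorem 3.10 of \cite{K:24}, reducing the proposition to the comparison of $\tilde\vartheta_{\tilde g}$ with $V[\alpha]\cdot 1$.
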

\begin{proof}
We first note (Theorem 3.10 in \cite{K:24}) that we have $\tau(\t)/E_{I_0}(\t)=\tilde\vartheta_{\tilde g}(z;\tilde\Omega)$, called the $M$-theta function defined by
\begin{align*}
\tilde\vartheta_{\tilde g}(z;\tilde\Omega)&=\sum_{{\bf m}\in\{0,1\}^{\tilde g}}\exp 2\pi i\left(\sum_{j<k}^{\tilde g}m_jm_k\tilde\Omega_{j,k}+\sum_{j=1}^{\tilde g}m_jz_j\right)\\
&=1+\sum_{j=1}^{\tilde g}e^{2\pi iz_j}+\sum_{k<l}^{\tilde g}e^{2\pi i\tilde\Omega_{k,l}}e^{2\pi i(z_k+z_l)}+\cdots+e^{2\pi i\sum_{k<l}\tilde\Omega_{k,l}}e^{2\pi i\sum_{j=1}^{\tilde g}z_j},
\end{align*}
where $2\pi i z_j=\phi_j(\t)+\phi^0_j,~(1\le j\le \tilde g)$ with
\[
\phi_{\tilde g_{k-1}+l}(\t)=\varphi(\t,\kappa_{i_k},\kappa_{j_l^{(k)}}),\quad \phi^0_{\tilde g_{k-1}+l}=\ln \left(b_{k,j_l^{(k)}}\right),\qquad (1\le l\le q_k,~\, 1\le k\le n),
\]
 and
\[
\exp(2\pi i\tilde{\Omega}_{\tilde{g}_{k-1}+l, \tilde{g}_{k'-1}+r})=\frac{(\kappa_{i_{k}}-\kappa_{i_{k'}})(\kappa_{j_{l}^{(k)}}-\kappa_{j_{r}^{(k')}})}{(\kappa_{i_{k}}-\kappa_{j_{r}^{(k')}})(\kappa_{j_{l}^{(k)}}-\kappa_{i_{k'}})},
\]
where we have used the order \eqref{eq:order}.
Then for $z_j$ with $j=\tilde g_{k-1}+l$, the $e^{2\pi iz_j}$ in $\tilde\vartheta_{\tilde g}(z,\tilde\Omega)$ becomes
\[
e^{2\pi i z_j}=b_{k,j_l^{(k)}}e^{\varphi(\t,\kappa_{i_k},\kappa_{j^{(k)}_l})}=b_{k,j_l^{(k)}}X(\kappa_{i_k},\kappa_{j_l^{(k)}})\cdot 1.
\]
The product term $e^{2\pi i\tilde\Omega_{j,l}}e^{2\pi i(z_j+z_l)}$ with $j=\tilde g_{k-1}+r$ and $l=\tilde g_{k'-1}+r'$ gives
\begin{align*}
e^{2\pi i\tilde\Omega_{j,l}}e^{2\pi i(z_j+z_l)}&=\frac{(\kappa_{i_{k}}-\kappa_{i_{k'}})(\kappa_{j_{r}^{(k)}}-\kappa_{j_{r'}^{(k')}})}{(\kappa_{i_{k}}-\kappa_{j_{r'}^{(k')}})(\kappa_{j_{r}^{(k)}}-\kappa_{i_{k'}})}
b_{k,j^{(k)}_r}b_{k',j_{r'}^{(k')}}e^{\varphi(\t,\kappa_{i_k},\kappa_{j_r^{(k)}})+\varphi(\t,\kappa_{i_{k'}},\kappa_{j_{r'}^{(k')}})}\\
&=b_{k,j^{(k)}_r}b_{k',j_{r'}^{(k')}}X(\kappa_{i_k},\kappa_{j_r^{(k)}})X(\kappa_{i_{k'}},\kappa_{j_{r'}^{(k')}})\cdot 1,
\end{align*}
where we have used Lemma \ref{lem:X}.

The higher products can be calculated in the similar way. One should note that some of the terms $e^{2\pi i\tilde\Omega_{j,l}}$ vanish, when $i_k=i_{k'}$ or $j_r^{(k)}=j_{r'}^{(k')}$
(see Lemma \ref{lem:X}).
\end{proof}
Since the solution of the KP equation is given by \eqref{eq:u-tau}, i.e. $u(\t)=2(\ln\tau(\t))_{xx}$, we consider the $\tau$-function as in the form $\tau(\t,\alpha)/E_{I_0}(\t,\alpha)$,
and we write
\begin{equation}\label{eq:tau-V}
\tau(\t,\alpha)=V[\alpha]\cdot 1.
\end{equation}

\subsection{The vertex operator construction of the KP solitons under the $\ell$-reduction}
Here we consider a $\tau$-function generated by several different $\alpha$'s, that is, for some $K>1$,
\begin{equation}\label{eq:tau-VO}
\tau(\t;\alpha_1,\ldots,\alpha_K):=V[\alpha_1]\cdot V[\alpha_2]\cdots V[\alpha_K]\cdot 1,
\end{equation}
where $V[\alpha_j]$ are given by \eqref{eq:VO} with the positive matrices $B[\alpha_j]$ determined by non-crossing matrices $A[\alpha_j]\in\Gr(n_j,m_j)_{\ge 0}$, respectively.

Let $(k[\alpha_r],j^{(k)}_l[\alpha_r])$ be the indices of the pivot and non-pivot columns in the $k$-th row of the nonzero entries in $A[\alpha_r]$. Then we have
\begin{align*}
&X(\kappa_{k[\alpha_r]},\kappa_{j^{(k)}_l[\alpha_r]})\cdot X(\kappa_{k'[\alpha_s]},\kappa_{j^{(k')}_{l'}[\alpha_s]})\\
=&
\frac{(\kappa_{k[\alpha_r]}-\kappa_{k'[\alpha_s]})(\kappa_{j^{(k)}_l[\alpha_r]}-\kappa_{j^{(k')}_{l'}[\alpha_s]})}{(\kappa_{k[\alpha_r]}-\kappa_{j^{(k')}_{l'}[\alpha_s]})
(\kappa_{j^{(k)}_l[\alpha_r]}-\kappa_{k'[\alpha_s]})}\,:X(\kappa_{k[\alpha_r]},\kappa_{j^{(k)}_l[\alpha_r]})\cdot X(\kappa_{k'[\alpha_s]},\kappa_{j^{(k')}_{l'}[\alpha_s]}):\,.
\end{align*}
Then we have the following lemma.
\begin{lemma}\label{eq:rs}
If the matrices $A[\alpha_r]$ and $A[\alpha_s]$ are non-crossing, then we have
\[
\frac{(\kappa_{k[\alpha_r]}-\kappa_{k'[\alpha_s]})(\kappa_{j^{(k)}_l[\alpha_r]}-\kappa_{j^{(k')}_{l'}[\alpha_s]})}{(\kappa_{k[\alpha_r]}-\kappa_{j^{(k')}_{l'}[\alpha_s]})
(\kappa_{j^{(k)}_l[\alpha_r]}-\kappa_{k'[\alpha_s]})}>0.
\]
\end{lemma}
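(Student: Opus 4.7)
My approach is to reduce the inequality to a sign computation for a cross-ratio of four real numbers. I abbreviate $p = \kappa_{k[\alpha_r]}$, $q = \kappa_{j^{(k)}_l[\alpha_r]}$, $p' = \kappa_{k'[\alpha_s]}$, $q' = \kappa_{j^{(k')}_{l'}[\alpha_s]}$; these are four distinct real numbers, since the roots of $\Phi_\ell(\kappa,\alpha_r)=0$ and $\Phi_\ell(\kappa,\alpha_s)=0$ are assumed distinct for $\alpha_r\ne\alpha_s$. The quantity in the lemma then becomes the cross-ratio
\[
R \;=\; \frac{(p-p')(q-q')}{(p-q')(q-p')},
\]
whose sign is an invariant of the relative ordering of the four points on $\R$.

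A straightforward case check over the (six) cyclic arrangements of $\{p,q,p',q'\}$ shows that $R>0$ exactly when the unordered pairs $\{p,q\}$ and $\{p',q'\}$ do not separate each other on $\R$, i.e.\ when the intervals $[\min(p,q),\max(p,q)]$ and $[\min(p',q'),\max(p',q')]$ are either disjoint or one nested inside the other; in the remaining (interleaved) configurations exactly one of the four factors flips sign and $R<0$. This is the usual planar meaning of the cross-ratio and is essentially the only arithmetic in the proof.

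It then remains to argue that the non-crossing hypothesis on $A[\alpha_r]$ and $A[\alpha_s]$ of Definition \ref{def:NC} forces precisely the non-separating configuration. Both $(p,q)$ and $(p',q')$ are pivot--non-pivot pairs of their respective matrices, so $p,q \in \hat{\mathcal{I}}[\alpha_r]$ and $p',q'\in\hat{\mathcal{I}}[\alpha_s]$. Using the description of $\pi(A[\alpha_t])$ recalled in Section \ref{sec:background}, the interval spanned by a pivot--non-pivot pair is always covered by a chain of chords of the chord diagram of $\pi(A[\alpha_t])$ whose endpoints lie in $\hat{\mathcal{I}}[\alpha_t]$. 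If the chord diagrams of $\pi(A[\alpha_r])$ and $\pi(A[\alpha_s])$ have no mutual crossings, then by a standard laminar-family argument no chord of $\pi(A[\alpha_s])$ can straddle the boundary of the interval spanned by $(p,q)$, and symmetrically; this forbids the interleaved case and delivers the positivity of $R$.

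The step I expect to be most delicate is the last one: a pivot--non-pivot pair is not in general itself a chord of $\pi(A[\alpha_t])$ --- for instance, in Example \ref{ex:17} the pair $(\kappa_1,\kappa_9)$ attached to $A[\alpha_1]$ is not a chord of $\pi(A[\alpha_1])=(1,6,9,5)$ --- so the argument must be routed through the fact that its span is covered by a \emph{chain} of chords of $\pi(A[\alpha_t])$ rather than by a single chord. Once this laminar covering is in place, the cross-ratio sign analysis of the first two paragraphs closes the proof.
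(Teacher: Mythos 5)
Your route is the same as the paper's: show that the two pairs $\{\kappa_{k[\alpha_r]},\kappa_{j^{(k)}_l[\alpha_r]}\}$ and $\{\kappa_{k'[\alpha_s]},\kappa_{j^{(k')}_{l'}[\alpha_s]}\}$ do not separate each other on $\R$, then read off the sign of the cross-ratio. The paper compresses this into two lines, simply asserting that non-crossing forces one of the nested/disjoint orderings. What you add, correctly, is the observation that this assertion is not automatic: the pair attached to a nonzero entry of $A[\alpha_r]$ need not be a chord of $\pi(A[\alpha_r])$ (your example of the pair $(\kappa_1,\kappa_9)$ versus $\pi(A[\alpha_1])=(1,6,9,5)$ in Example \ref{ex:17} is exactly on point), so Definition \ref{def:NC}, which speaks only of chords, does not directly order the four points. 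Your repair --- cover the interval spanned by a pivot/non-pivot pair by a chain of chords and argue laminarly --- is the right idea and can be completed.

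Two places where the execution is still thinner than it needs to be. First, the covering claim requires the fact that a nonzero entry of the reduced row echelon form of $A[\alpha_t]$ can only join positions lying in the same interval-connected component of the chord diagram of $\pi(A[\alpha_t])$ (otherwise the positroid cell factors as a product over consecutive blocks and the cross entry would vanish); granting this, the span of that component is covered by its chords. Second, ``no chord of $\pi(A[\alpha_s])$ straddles the boundary of $[p,q]$'' does not by itself exclude the interleaved configuration $p<p'<q<q'$: every individual $s$-chord could lie inside $[p,q]$, outside it, or properly contain it, with $p'$ and $q'$ linked only through a chain of such chords. One must apply the covering to both pairs and conclude at the level of components: mutual non-crossing of chords forces the supports of an $r$-component and an $s$-component to be non-crossing blocks (disjoint, or one sitting in a single gap of the other), and two pairs drawn from non-crossing blocks cannot interleave. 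With these two points supplied your argument closes; as written it is a correct strategy whose decisive combinatorial step remains a sketch, though it is already more careful about that step than the paper's own proof.
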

\begin{proof}
Since the permutation $A[\alpha_{r}]$ and $A[\alpha_{s}]$ are non-crossing, we have either
\begin{align*}
\kappa_{k[\alpha_r]}<\kappa_{k'[\alpha_s]}<\kappa_{j^{(k')}_{l'}[\alpha_s]}<\kappa_{j^{(k)}_{l}[\alpha_r]}\quad\quad\text{or}\quad\quad \kappa_{k[\alpha_r]}<\kappa_{j^{(k)}_{l}[\alpha_r]}<\kappa_{k'[\alpha_s]}<\kappa_{j^{(k')}_{l'}[\alpha_s]}.
\end{align*}
These orderings implies the assertion.
\end{proof}

Then it is immediate to have the following proposition.
\begin{proposition}\label{prop:regularity}
Suppose that $A[\alpha_k]\in\Gr(n_k,m_k)_{\ge 0}$ for $k=1,\ldots,K$ are mutually non-crossing. Let $V[\alpha_k]$ be the vertex operators
associated with these matrices. Then the $\tau$-function \eqref{eq:tau-VO} generated by these vertex operators
gives a KP soliton (regular) associated with a matrix in $\Gr(N,M)_{\ge0}$, where $N=n_1+\cdots+n_K$ and $M=m_1+\cdots+m_K$.
\end{proposition}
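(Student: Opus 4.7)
The plan is to reduce the statement to the already-established ingredients: Proposition~\ref{prop:Vsoliton}, Theorem~\ref{thm:main}, and Lemma~\ref{eq:rs}. First I would expand the product
\[
\tau(\t;\alpha_1,\ldots,\alpha_K) = V[\alpha_1]\cdots V[\alpha_K]\cdot 1
\]
into a sum of normal-ordered monomials of the basic vertices $X(\kappa_{i_k[\alpha_r]},\kappa_{j^{(k)}_l[\alpha_r]})$. By Lemma~\ref{lem:X}, bringing all the $X$'s into normal order generates a product of two-body factors of the form appearing in Lemma~\ref{eq:rs}. Each such factor is either of the ``intra-$\alpha_r$'' type, already present in the single-color $\tau$-function $V[\alpha_r]\cdot 1$, or of the ``cross-color'' type between indices coming from $A[\alpha_r]$ and $A[\alpha_s]$ with $r\neq s$.

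Next I would identify the intra-color contribution with the result of Proposition~\ref{prop:Vsoliton}: each $V[\alpha_k]\cdot 1$ equals the $\tau$-function $\tau(\t,\alpha_k)/E_{I_0[\alpha_k]}(\t,\alpha_k)$ for the totally nonnegative matrix $A[\alpha_k]\in\Gr(n_k,m_k)_{\geq 0}$, and in particular all its coefficients (the minors of $A[\alpha_k]$, times standard positive Vandermonde factors) are strictly positive. For the cross-color factors, the non-crossing hypothesis between $A[\alpha_r]$ and $A[\alpha_s]$ together with Lemma~\ref{eq:rs} makes every two-body coupling coefficient positive. Multiplying these together shows that every term in the fully expanded $\tau$-function has a strictly positive coefficient in front of an exponential $E_I(\t)$, where $I$ runs over the pivot/non-pivot swaps dictated by the chosen vertex operators in each $V[\alpha_k]$.

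To finish, I would match this expansion with the $\tau$-function generated by the combined soliton data given by Theorem~\ref{thm:main}. By that theorem the $\kappa$-direct sum $\widehat{\bigoplus}_{k=1}^K A[\alpha_k]$ can be made totally nonnegative to produce $A[\alpha_1,\ldots,\alpha_K]\in\Gr(N,M)_{\geq 0}$ with derangement $\prod_k \pi(A[\alpha_k])$. Its $\tau$-function, written via \eqref{eq:tau-Exp}, is a sum over $I\in\mathcal{M}(A[\alpha_1,\ldots,\alpha_K])$ whose minors, by Lemma~\ref{lem:matP}, factor as products $\Delta_{I_r}(A[\alpha_r])$ across the colors. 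These are exactly the positive coefficients coming from the vertex-operator expansion above, so the two $\tau$-functions coincide up to the overall factor $\prod_k E_{I_0[\alpha_k]}(\t,\alpha_k)$, which only shifts the overall reference exponential and does not affect $u=2(\ln\tau)_{xx}$. Positivity of all coefficients then implies that the resulting soliton solution is regular, i.e. a genuine KP soliton.

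The main obstacle I expect is the bookkeeping in the second step: one must verify that the sign structure dictated by Lemma~\ref{eq:sign} for the entries of each $A[\alpha_k]$, combined with the Vandermonde-type signs produced by normal-ordering $X$'s from different colors, is \emph{exactly} absorbed by the positive cross-color factors of Lemma~\ref{eq:rs} and by the $(-1)^{\mu+\nu}$ sign in Lemma~\ref{eq:sign}. Once this sign accounting is laid out index-by-index for the pair $(r,s)$ and extended inductively to $K$ factors, the equality of the two $\tau$-functions — and hence the regularity claim — follows directly from Theorem~\ref{thm:main}.
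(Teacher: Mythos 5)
Your first two steps reproduce the paper's (very terse) argument: expand $V[\alpha_1]\cdots V[\alpha_K]\cdot 1$ into normal-ordered monomials, note that the intra-color coefficients are exactly those of the single-$\alpha$ $\tau$-functions of Proposition~\ref{prop:Vsoliton} (positive, since each $B[\alpha_k]$ is positive and the $e^{2\pi i\tilde\Omega}$ factors within one color are positive), and use Lemma~\ref{eq:rs} to get positivity of every cross-color normal-ordering factor from the non-crossing hypothesis. That already gives a $\tau$-function which is a sum of exponentials $E_I$ with nonnegative coefficients, hence a totally nonnegative point of $\Gr(N,M)$ in the cell labelled by $\prod_k\pi(A[\alpha_k])$, and regularity follows. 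This is all the paper claims, and up to here your proposal is sound and essentially identical to the paper's route.

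The third step, however, contains a genuine error: the vertex-operator $\tau$-function \eqref{eq:tau-VO} does \emph{not} coincide with the $\tau$-function of the matrix $A[\alpha_1,\ldots,\alpha_K]$ produced by Theorem~\ref{thm:main}, not even up to the overall factor $\prod_kE_{I_0[\alpha_k]}$. The paper states this explicitly in the remark following the proposition and exhibits it in the closing example: the coefficient $b_{1,5}=a_{1,5}\frac{\kappa_6-\kappa_5}{\kappa_6-\kappa_1}$ attached to $X(\kappa_1,\kappa_5)$ in $V[\alpha_1]$ differs from the coefficient $c_{1,5}$ coming from $A[\alpha_1,\alpha_2]$ by the extra cross-color ratio $\frac{(\kappa_2-\kappa_5)(\kappa_3-\kappa_5)(\kappa_7-\kappa_5)}{(\kappa_2-\kappa_1)(\kappa_3-\kappa_1)(\kappa_7-\kappa_1)}$, because the $B$-matrix of the combined matrix involves products over \emph{all} pivots of $A[\alpha_1,\alpha_2]$, not just those of $A[\alpha_1]$. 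So the ``sign accounting'' you flag as the main obstacle is not merely delicate bookkeeping --- the exact cancellation you are hoping for does not occur, and the two $\tau$-functions correspond to \emph{different} points of the same positroid cell (different phase shifts). The fix is simply to drop the attempted identification: positivity of all coefficients in the expansion already certifies that the expansion is the $\tau$-function of \emph{some} matrix in $\Gr(N,M)_{\ge0}$ with derangement $\prod_k\pi(A[\alpha_k])$, which is exactly what Proposition~\ref{prop:regularity} asserts.
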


One should remark that the $\tau$-function \eqref{eq:tau-VO} is different from that associated with the totally nonnegative matrix given by the $\kappa$-direct sum
of these matrices. However, those solutions have the same permutation, that is, their corresponding matrices are different but they are in the same cell in $\Gr(N,M)_{\ge 0}$.
We give the following example to illustrate the remark.

\begin{example}
We continue Example \ref{ex:17} ($\ell=7$) with
\[
\pi(A[\alpha_1])=(1,6,9,5),\qquad \pi(A[\alpha_2])=(2,3,4)(7,8).
\]
where the numbers are in the sorted coordinates.
The corresponding totally nonnegative matrices are
\begin{align*}
A[\alpha_{1}]=\left(
\begin{array}{cccc}
1 & a_{1,5} & 0 & a_{1,9}\\
0 & 0 &  1 & a_{4,9}\\
\end{array}
\right),\qquad
A[\alpha_2]=\begin{pmatrix}
1 & 0 &a_{2,4} & 0 & 0 \\
0 & 1 & a_{3,4} & 0 & 0\\
0 & 0 & 0 & 1 & a_{5,8}
\end{pmatrix},
\end{align*}
where the signs of the nonzero elements $a_{i,j}$ can be determined by Lemma \ref{eq:sign} (see below).

From the nonzero elements in the matrices $A[\alpha_1]$ and $A[\alpha_2]$, the vertex operators for these matrices are given by
\begin{align*}
V[\alpha_1]&=\exp\left(b_{1,5}X(\kappa_1,\kappa_5)+b_{1,9}X(\kappa_1,\kappa_9)+
b_{4,9}X(\kappa_6,\kappa_9)\right),\\
V[\alpha_2]&=\exp\left(b_{2,4}X(\kappa_2,\kappa_4)+b_{3,4}X(\kappa_3,\kappa_4)+
b_{5,8}X(\kappa_7,\kappa_8)\right).
\end{align*}
Here the positive matrices $B[\alpha_1]$ and $B[\alpha_2]$ are given by
\[
B[\alpha_1]=\begin{pmatrix} b_{1,5}&b_{1,9}\\0& b_{4,9}\end{pmatrix},\qquad
B[\alpha_2]=\begin{pmatrix} b_{2,4}&0\\ b_{3,4}&0\\0&b_{5,8}\end{pmatrix},
\]
where
\begin{align*}
b_{1,5}=a_{1,5}\frac{\kappa_6-\kappa_5}{\kappa_6-\kappa_1},\quad
b_{1,9}=a_{1,9}\frac{\kappa_6-\kappa_9}{\kappa_6-\kappa_1},\quad
b_{4,9}=a_{4,9}\frac{\kappa_1-\kappa_9}{\kappa_1-\kappa_6},
\end{align*}
and
\begin{align*}
b_{2,4}&=a_{2,4}\frac{(\kappa_3-\kappa_4)(\kappa_7-\kappa_4)}{(\kappa_3-\kappa_2)(\kappa_7-\kappa_2)},
\quad
b_{3,4}=a_{3,4}\frac{(\kappa_2-\kappa_4)(\kappa_7-\kappa_4)}{(\kappa_2-\kappa_3)(\kappa_7-\kappa_3)},\\
b_{5,8}&=a_{5,8}\frac{(\kappa_2-\kappa_8)(\kappa_3-\kappa_8)}{(\kappa_2-\kappa_7)(\kappa_3-\kappa_7)}.
\end{align*}
Note here that Lemma \ref{eq:sign} shows that the sign $a_{1,9}, a_{2,4}<0$ and others are positive.

From these vertex operators, we have the $\tau$-functions given by $\tau(\t,\alpha_j)=V[\alpha_j]\cdot 1$, i.e.
\begin{align*}
\tau(\t,\alpha_1) =&
1+b_{1,5}e^{\phi_{1,5}}+b_{1,9}e^{\phi_{1,9}}+b_{4,9}e^{\phi_{6,9}}+b_{1,5}b_{4,9}\frac{(\kappa_1-\kappa_5)(\kappa_6-\kappa_9)}{(\kappa_1-\kappa_9)(\kappa_5-\kappa_6)}e^{\phi_{1,5}+\phi_{6,9}},\\
\tau(\t,\alpha_2)=&1+b_{2,4}e^{\phi_{2,4}}+b_{3,4}e^{\phi_{3,4}}+b_{5,8}e^{\phi_{7,8}}+b_{2,4}b_{5,8}\frac{(\kappa_2-\kappa_7)(\kappa_4-\kappa_8)}{(\kappa_2-\kappa_8)(\kappa_4-\kappa_7)}e^{\phi_{2,4}+\phi_{7,8}}+\\
&+ b_{3,4}b_{5,8}\frac{(\kappa_3-\kappa_7)(\kappa_4-\kappa_8)}{(\kappa_3-\kappa_8)(\kappa_4-\kappa_7)}e^{\phi_{3,4}+\phi_{7,8}},
\nonumber
\end{align*}
where $\phi_{i,j}=\xi_j(\t)-\xi_i(\t)$.

The $\tau$-function generated by those vertex operators are given by
\[
\tau(\t;\alpha_1,\alpha_2)=V[\alpha_1]\cdot V[\alpha_2]\cdot 1.
\]
Since all the operators $X(\kappa_i,\kappa_j)$ commute, we have
\begin{align*}
V[\alpha_1]\cdot V[\alpha_2]=\exp\left(b_{1,5}X(\kappa_1,\kappa_5)+b_{1,9}X(\kappa_1,\kappa_9)+
b_{4,9}X(\kappa_6,\kappa_9)\right.\label{eq:V12}\\
\left. +b_{2,4}X(\kappa_2,\kappa_4)+b_{3,4}X(\kappa_3,\kappa_4)+
b_{5,8}X(\kappa_7,\kappa_8)\right).\nonumber
\end{align*}

Now we give the vertex operator associated with
the totally nonnegative matrix $A[\alpha_1,\alpha_2]$ from the $\kappa$-direct sum $A[\alpha_1]\hat\oplus A[\alpha_2]$, which is given by
\begin{align*}
A[\alpha_1,\alpha_2]=\left(
\begin{array}{ccccccccc}
1 & 0 & 0 & 0 & a_{1,5} & 0 & 0 & 0 & a_{1,9}\\
0 & 1 & 0 & a_{2,4} & 0 & 0 & 0 & 0 &0\\
0 & 0 & 1 & a_{3,4} & 0 & 0 & 0 & 0 &0\\
0 & 0 & 0 & 0 & 0 & 1 & 0 & 0 & a_{4,9}\\
0 & 0 & 0 & 0 & 0 & 0 & 1 & a_{5,8} &0\\
\end{array}
\right).
\end{align*}
Then the corresponding positive matrix $B[\alpha_1,\alpha_2]$ is
\[
B[\alpha_1,\alpha_2]=\begin{pmatrix} 0&c_{1,5}&0&c_{1,9}\\c_{2,4}& 0& 0 & 0\\ c_{3,4} & 0& 0& 0\\
0& 0& 0&c_{4,9}\\0&0&c_{5,8}&0\end{pmatrix},
\]
where the positive elements $c_{i,j}$ are given by, for example,
\begin{align*}
c_{1,5}=a_{1,5}\frac{(\kappa_2-\kappa_5)(\kappa_3-\kappa_5)(\kappa_6-\kappa_5)(\kappa_7-\kappa_5)}{(\kappa_2-\kappa_1)(\kappa_3-\kappa_1)(\kappa_6-\kappa_1)(\kappa_7-\kappa_1)},\\
c_{2,4}=a_{2,4}\frac{(\kappa_1-\kappa_4)(\kappa_3-\kappa_4)(\kappa_6-\kappa_4)(\kappa_7-\kappa_4)}{(\kappa_1-\kappa_2)(\kappa_3-\kappa_2)(\kappa_6-\kappa_2)(\kappa_7-\kappa_2)}.
\end{align*}
Note that we have $a_{2,4}, a_{4,9}<0$ and others are positive (i.e. they have the different signs in the cases $V[\alpha_1]$ and $V[\alpha_2]$).

Then the $\tau$-function associated with $A[\alpha_1,\alpha_2]$ is given by
\begin{equation*}\label{eq:tau12}
\tau_{A[\alpha_1,\alpha_2]}(\t)=V[\alpha_1,\alpha_2]\cdot 1,
\end{equation*}
where the vertex operator $V[\alpha_1,\alpha_2]$ associated with $A[\alpha_1,\alpha_2]$ is given by
\begin{align*}
V[\alpha_1,\alpha_2]=&\exp\left(c_{1,5}X(\kappa_1,\kappa_5)+c_{1,9}X(\kappa_1,\kappa_9)+c_{4,9}X(\kappa_6,\kappa_9)+\right.\\
&\left.+c_{2,4}X(\kappa_2,\kappa_4)+c_{3,4}X(\kappa_3,\kappa_4)+c_{5,8}X(\kappa_7,\kappa_8)\right).
\end{align*}
One should note that this vertex operator is not the same as $V[\alpha_1]\cdot V[\alpha_2]$.
We have the following relations, e.g.
\begin{align*}
&c_{1,5}=b_{1,5}\frac{(\kappa_2-\kappa_5)(\kappa_3-\kappa_5)(\kappa_7-\kappa_5)}{(\kappa_2-\kappa_1)(\kappa_3-\kappa_1)(\kappa_7-\kappa_1)},\qquad
c_{2,4}=b_{2,4}\frac{(\kappa_1-\kappa_4)(\kappa_6-\kappa_4)}{(\kappa_1-\kappa_2)(\kappa_6-\kappa_2)}.
\end{align*}
Notice that both $b_{i,j}$ and $c_{i,j}$ are positive. {The solutions generated by
$V[\alpha_1]\cdot V[\alpha_2]$ and $V[\alpha_1,\alpha_2]$ are both regular and have the same set of solitons. However, the phases of the solitons are different due to the different positive matrices in the vertex operators.}
We might say that they are topologically the same, and we may say $V[\alpha_1,\alpha_2]\equiv V[\alpha_1]\cdot V[\alpha_2]$.

\end{example}

\bigskip
\noindent
{\bf Acknowledgements.}
We would like to thank the referees for valuable comments and informing us the references \cite{W:16} and \cite{Z:02}. This article is dedicated to Professor Ke Wu at Capital Normal University in celebration of his 80th birthday.
One of the authors (C.L.) is supported by National Natural Science Foundation of China (Grant No. 12071237).

 \bigskip
 \noindent
 {\bf Declarations.}
 \begin{itemize}
\item Data sharing not applicable to this article as no datasets were generated or analyzed during the current study.\\
 \item The authors declare no conflicts of interest associated with this manuscript.
 \end{itemize}


\raggedright


\end{document}